\documentclass[pra,twocolumn,a4paper,nofootinbib]{revtex4-1}

\usepackage{hyperref}
\usepackage{graphicx}
\usepackage{amsmath}
\usepackage{amssymb}
\usepackage{color}
\usepackage{amsthm}
\usepackage{amsfonts}
\usepackage{kbordermatrix}
\newtheorem{theorem}{Theorem}
\newtheorem{cor}{Corollary}

\newtheorem{lem}{Lemma}

\begin{document}

\title{Locality-Preserving Logical Operators in Topological Stabiliser Codes}
\author{Paul Webster}
\affiliation{Centre for Engineered Quantum Systems, School of Physics, The University of Sydney, Sydney, NSW 2006, Australia}
\author{Stephen D.~Bartlett}
\affiliation{Centre for Engineered Quantum Systems, School of Physics, The University of Sydney, Sydney, NSW 2006, Australia}

\date{24 January 2018}

\begin{abstract}
Locality-preserving logical operators in topological codes are naturally fault-tolerant, since they preserve the correctability of local errors.  Using a correspondence between such operators and gapped domain walls, we describe a procedure for finding all locality-preserving logical operators admitted by a large and important class of topological stabiliser codes.  In particular, we focus on those equivalent to a stack of a finite number of surface codes of any spatial dimension, where our procedure fully specifies the group of locality-preserving logical operators.  We also present examples of how our procedure applies to codes with different boundary conditions, including colour codes and toric codes, as well as more general codes such as abelian quantum double models and codes with fermionic excitations in more than two dimensions.  
\end{abstract}

\maketitle
\section{Introduction}
Quantum computation offers the potential for algorithms that can solve important problems significantly more quickly than is possible on classical computers~\cite{Shor,Grover}. The realisation of such computation, however, is made challenging by the tendency of large quantum systems to decohere very quickly~\cite{Chuang}.  To combat decoherence, it is expected that quantum computers will need to make use of quantum error correcting codes, allowing for the effects of small amounts of decoherence to be corrected before it causes irreversible damage to the computation~\cite{Shor2}.  In addition, however, we must also demand that logical operators---gates---can be performed fault-tolerantly, meaning that the gates must not amplify correctable errors to uncorrectable ones~\cite{Preskill, NC,Gottesman2}. A particular quantum error correcting code allows only a subset of logical operators to be implemented fault-tolerantly~\cite{EastinKnill}.  This subset will generally be different for different choices of codes.  A central question in the theory of quantum computation concerns how to identify the fault-tolerant logical operators associated with a given quantum error correcting code, and how to identify codes for which the fault-tolerant gates have desirable properties.

Topological stabiliser codes are a class of quantum error correcting codes that have attracted widespread attention due to their simplicity and their relation to robust zero-temperature phases of quantum many-body systems~\cite{KitaevA,Bombin2DTSC,Niggetal}.  The codestates of such codes are topologically protected, meaning that all local errors are correctable~\cite{BDcP}. As such, logical operations will be fault tolerant provided they are locality-preserving~\cite{BK}. Such locality-preserving logical operators (LPLOs) may be seen as a natural generalisation of transversal operators~\cite{Terhal}: errors can grow as long as they remain bounded by some constant size.  However, Bravyi and K{\"o}nig~\cite{BK}, and subsequently Pastawski and Yoshida~\cite{PY}, have shown that topological stabiliser codes have strong restrictions on the class of LPLOs that may be implemented fault-tolerantly.  Specifically, LPLOs in topological stabiliser codes must lie within a bounded set of levels of the Clifford hierarchy.  

Recently, the structure of LPLOs in topological stabiliser codes has been further developed through a correspondence between gapped domain walls and LPLOs~\cite{YoshidaA}. Gapped domain walls are topological structures associated with a symmetry that permutes anyons (and other topological objects)  \cite{KitaevKong,Lanetal}.  This insight is intriguing, as it suggests that the identification of all LPLOs in a topological code may be determined via the structure of the topological data of the model alone.

In this paper, we exploit this correspondence between gapped domain walls and LPLOs in topological stabiliser codes to construct a framework for identifying and classifying the set of LPLOs for a given topological stabiliser code.  Specifically, we detail a procedure to find all of the LPLOs for a topological stabiliser code that is locally equivalent to a finite number of toric codes. This is a large class of codes, containing all non-chiral, translationally invariant two dimensional topological stabiliser codes~\cite{BDcP}, and a wide range of higher dimensional topological stabiliser codes including all colour codes~\cite{Kubica}. More generally, we discuss how the results can also be extended to allow for analysis of abelian quantum double models~\cite{KitaevA,YoshidaC}, and higher dimensional models with fermionic excitaitons~\cite{LW}.

The paper is structured as follows. In Sec.~\ref{II}, we introduce topological stabiliser codes, and describe the structure of the particular class of such codes we focus on: those that are locally equivalent to a finite number of toric codes. In Sec.~\ref{III}, we review the connection between LPLOs and gapped domain walls for two dimensional topological stabiliser codes, following Refs.~\cite{YoshidaA,Beverland}. We provide a more detailed classification of the LPLOs admitted by such codes, and explicitly illustrate how the different boundary conditions of toric, surface and colour codes affect the admitted logical gates.  In Sec.~\ref{IV}, we begin exploring the more exotic behaviour of higher dimensional codes. In particular, we discuss how the concepts of domain walls and excitations must be generalised to account for the higher dimensional excitations encountered in codes of more than two dimensions. We illustrate these concepts by the three dimensional examples of a stack of disjoint surface codes and the colour code. We also explore how the ideas may also be applied to the three dimensional Levin-Wen fermion model~\cite{LW}, which is not equivalent to a number of toric codes but which has a similar structure that nonetheless admits the same type of analysis. Finally, in Sec.~\ref{V}, we generalise these ideas to fully classify the LPLOs admitted by any code locally equivalent to a finite number of disjoint surface codes. We also generalise our examples of colour codes and toric codes to see how these results may be adapted to codes of different boundary conditions.

\section{Topological Stabiliser Codes}
\label{II}
For concreteness and simplicity, we will restrict our consideration to topological stabiliser codes. Such codes have logical states that are topologically protected ground states of a Hamiltonian, $\mathcal{H}=\sum_{i} S_i$, where $\{S_i\}$ is a set of local, commuting Pauli operators. The abelian group, $\mathcal{S}$, generated by $\{S_i\}$ is thus the stabiliser group of the logical states of the code. We note, however, that our results are expected to apply more broadly to a wide range of topological models, including quantum double models and string-net models, where the Hamiltonian consists of a sum of local commuting projectors.  (We will briefly return to this generalisation later.)

In addition, we will focus on topological stabiliser codes that are locally equivalent to some finite number of toric codes of the same spatial dimension.  While this may seem to be a strong restriction, it is, in fact, known to include a wide range of codes of interest.  Specifically, it includes all non-chiral, translationally invariant two dimensional topological stabiliser codes \cite{BDcP}, as well as a wide range of relevant higher dimensional codes, including all colour codes \cite{Kubica}.

The bulk properties of a toric code can be specified by two parameters, up to local equivalence: the spatial dimension $d$, and the dimension of its magnetic flux excitations, $M$.  Note that $M$ can take any value in the range $0$ to $d-2$, and that it fixes the dimension of electric charge excitations as $E=d-2-M$.  Therefore, on an infinite lattice, there are $d-1$ inequivalent toric codes of dimension $d$.  

To specify the logical qubits (i.e.,~the topological ground state degeneracy) of such codes, we need to specify the boundary conditions.  The canonical set of boundary conditions we will consider for a $d$-dimensional toric code are those of the \emph{surface code} (and its higher dimensional generalizations)~\cite{SC}, as shown in Figs.~\ref{fig:2DSC} and \ref{fig:3DSC}.  These boundary conditions correspond to those of a $d$ dimensional hypercubic lattice with $M+1$ pairs of opposite $d-1$ dimensional \emph{smooth} boundaries, meaning that magnetic fluxes can condense at them. The other $E+1$ pairs of boundaries are then \emph{rough}, meaning that electric charges can condense at them.  This choice of boundary conditions gives a two dimensional ground state degeneracy---a single logical qubit.  A choice of anticommuting pair of logical operators is given by an $\bar{X}$ logical operator defined by an $M+1$ dimensional hyperplane of $X$'s mapped out by a magnetic flux condensing on a pair of opposite smooth boundaries, and a $\bar{Z}$ operator similarly realised but with electric charges and rough boundaries.  We refer to this specification of $d$-dimensional toric code with boundary conditions as a \emph{$d$-dimensional surface code}, as it naturally generalises the two dimensional surface code.

\begin{figure}
\centering
\includegraphics[width=0.8\linewidth]{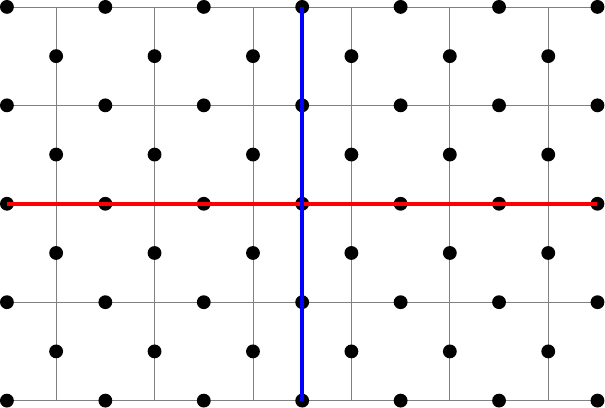}
\caption{Two dimensional surface code: The top and bottom edges are smooth, the left and right are rough. The blue line shows an $\bar{X}$ operator that may be viewed equivalently as being produced by a magnetic flux following the path, or by $X$ operators being applied at each qubit along the path. The red line shows a $\bar{Z}$ operator, which may similarly be considered the path of an electric charge, or a product of $Z$ operators. \label{fig:2DSC}}
\label{fig:SC}
\end{figure}

\begin{figure}
\centering
\includegraphics[width=0.5\linewidth]{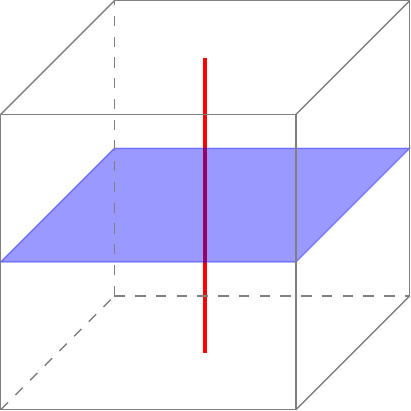}
\caption{Three dimensional surface code: The top and bottom faces are rough, the other four smooth. The red line shows a logical $Z$ operator produced by a zero dimensional electric charge following the path. The blue plane is a logical $X$ operator produced by a one dimensional magnetic flux condensing at one of the four smooth faces, and travelling to the opposite face, tracing out the plane behind it. \label{fig:3DSC}}
\end{figure}

The $d$-dimensional surface code can serve as a building block for more general codes.  One such generalization of a $d$-dimensional topological stabiliser code is a \emph{stack} of $n$ disjoint $d$-dimensional surface codes.  There is no requirement for the codes in the stack to be identical; they may have magnetic fluxes of different dimensions. Such a code is then parameterised by its dimension, $d$, and the numbers $\{n_i\}$ of surface codes it has with $i$ dimensional magnetic fluxes, for $0 \leq i \leq d-2$. Since the codes are disjoint, the ordering of the codes in the stack is not important. We refer to such a code of this type as a \emph{stacked code}, specified by the set of parameters $\{d,n_i\}$. It clearly encodes $n=\sum_{i=0}^{d-2} n_i$ logical qubits; one for each surface code.

The codes we consider here, then, are locally equivalent to a stacked code in the bulk. Specifically, a $d$-dimensional code which is locally equivalent to $n$ toric codes, of which $n_i$ have $i$ dimensional magnetic fluxes must be locally equivalent to the stacked code with the same parameters in the bulk. This implies that the bulk properties of the code must be the same as those for the corresponding stacked code. We can add additional structure by considering boundary conditions that are more general than those for the stacked code. For example, a colour code on a triangular lattice may be viewed as two toric codes which are connected by a fold along one of the boundaries~\cite{Kubica}. Thus, properties of the code which depend on the boundary conditions of the code may differ from those of the corresponding stacked code.

The general approach we take throughout the paper is to consider the stacked code to be a canonical code for each set of parameters  $\{d,n_i\}$. We then classify the relevant bulk properties of codes by performing this classification on stacked codes. Specifically, the properties we deduce from the stacked code are the excitations of the code, their exchange and braiding statistics, and the domain walls the code admits. This has the advantage that these properties are especially easy to determine for stacked codes. Properties of the code which are sensitive to boundary conditions, in particular the logical operators, can then be deduced by considering the excitations and domain walls of the code in the context of the boundary conditions.

\section{Classification for Two Dimensional Codes}
\label{III}

In this section, we establish the key concepts and correspondences necessary to classify the LPLOs in the context of two dimensional topological stabiliser codes, where they are well-understood. In Sec.~\ref{subsec:2DLPLO}, we review the correspondence between LPLOs and gapped, transparent domain walls in two dimensional topological codes. This correspondence has previously been described by Yoshida~\cite{YoshidaA}, but we reproduce it here in our framework since it is central to the results we develop in the remainder of the paper. Where possible, we develop these ideas in a way that is independent of dimension, and so they will be relevant to our consideration of higher dimensional codes in subsequent sections.  In Sec.~\ref{subsec:2Dexamples}, we illustrate how the ideas we have developed may be used to classify the LPLOs admitted by a given two dimensional code. 

\subsection{Logical Operators and Gapped Domain Walls}
\label{subsec:2DLPLO}

\subsubsection{Locality-Preserving Logical Operators}

\begin{figure}
\centering
\includegraphics[width=0.95\linewidth]{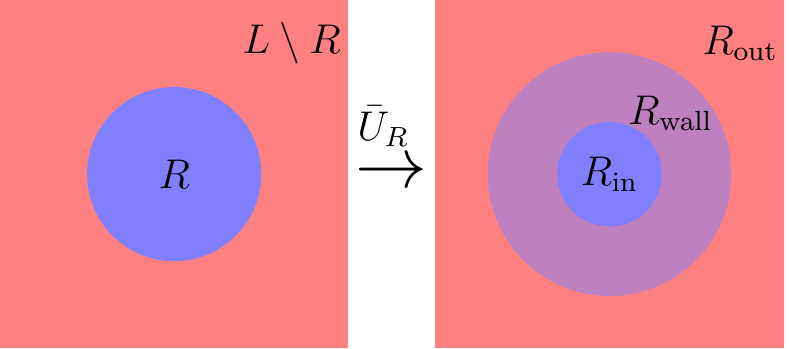}
\caption{A domain wall corresponding to LPLO $\bar{U}$. On the left the lattice, $L$, is partitioned into region $R$ and the rest. After $\bar{U}_R$, the restriction of $\bar{U}$ to $R$, has acted the image of the part of the Hamiltonian confined to $R$ extends outwards to a region, $R'$ larger than $R$ by only a finite width, $C$, and the rest of the Hamiltonian similarly has image with support on a region $L \setminus R''$ where $R''$ is smaller than $R$ by only a finite width, $C$. Thus, the purple region on the right, ${R}_{\text{wall}} \equiv R'\setminus R''$ may be interpreted as a domain wall, since by acting on the original system by the restriction of $\bar{U}$ to $R$ we create an inner region ${R}_{\text{in}}\equiv R''$ transformed as though $\bar{U}$ had acted on the whole of $L$ and an outer region which is unaffected by the action, separated by the region ${R}_{\text{wall}}$.}
\label{fig:corresp}
\end{figure}

In a topological model, a \emph{logical operator} is a unitary operator that maps the topologically degenerate ground space onto itself. 
A \emph{locality-preserving logical operator} (LPLO) is a logical operator that preserves the locality of errors. Specifically, a logical operator $\bar{U}$ is locality-preserving if and only if there exists some constant $C$ such that, for any operator of a code with support only in some region of the code, $R$, we have that $\bar{U}R\bar{U}^\dag$ has support only in a region $R'$ which is at most $C$ larger than $R$. LPLOs in topological stabiliser codes always include Pauli logical operators, which act on errors only by adding phases. Since they are always locality preserving, however, we omit them from explicit consideration in our subsequent analysis. Instead our focus is on LPLOs that transform errors by more than just a phase. Such operators must be two dimensional, since they must transform errors that could occur at any point in the lattice.

Such LPLOs provide a natural way to introduce \emph{domain walls}.  Consider $\bar{U}_R$, the restriction of an LPLO to a simply connected region $R$ of the code. The first observation we can make is that this gives rise only to a region of width at most $2C$ in which the ground space of the code is not preserved. To see this, partition the Hamiltonian of the code into the sum of two parts, $\mathcal{H} = \mathcal{H}_{\text{in}} + \mathcal{H}_{\text{out}}$.  The interior Hamiltonian $\mathcal{H}_{\text{in}}$ includes only terms with support entirely within $R$, and $\mathcal{H}_\text{out} \equiv \mathcal{H}-\mathcal{H}_{\text{in}}$.  The action of our LPLO $\bar{U}$ on $\mathcal{H}$ is given by
\begin{equation}
\bar{U} \mathcal{H}\bar{U}^\dag = \bar{U} \mathcal{H}_{\text{in}}\bar{U}^\dag + \bar{U} \mathcal{H}_{\text{out}}\bar{U}^\dag \,.
\end{equation}
Now, $\bar{U} \mathcal{H}_{\text{in}}\bar{U}^\dag$ must be confined to a region $R'$ at most $C$ larger than $R$.  Similarly, $\bar{U} \mathcal{H}_\text{out} \bar{U}^\dagger$ has no support on a region $R'' \subseteq R$ that is at most $C$ smaller than $R$.  As a result, if we apply the restriction $\bar{U}_R$ of this LPLO only to $R$, then the interior of $R''$ will be indistinguishable (using any local operator on $R''$) from the case where the unrestricted operator $\bar{U}$ had been applied to the entire code. Therefore, it will be in the ground space of the code, since $\bar{U}\mathcal{H}_{\text{out}}\bar{U}^\dag$ has no support in $R''$. Similarly, the exterior of $R'$ will be unaffected by $\bar{U}_R$ since $\bar{U}\mathcal{H}_{\text{in}}\bar{U}^\dag$ has no support outside $R'$. Thus, as claimed, we have only a region $R' \setminus R''$ of width $2C$ in which the ground space of the code is disturbed. This is illustrated in Fig.~\ref{fig:corresp}. We now argue that this region can be understood as a domain wall.

\subsubsection{Domain Walls}

Domain walls separate two regions of a code that differ by some symmetry of the ground space. In particular, the region $R' \setminus R''$ as defined above can be viewed as a domain wall, since it separates an interior region which has been transformed by $\bar{U}$ from an exterior, untransformed region.	A domain wall created in this way will also be gapped; that is, the Hamiltonian will remain gapped at the wall, since the operator creating it is unitary

Gapped domain walls can be characterised by the behaviour of excitations as they cross the domain wall. {To see this, consider an excitation $a_i$ corresponding to an error that anticommutes with a subset, $\mathcal{S}_i$, of the stabiliser group, $\mathcal{S}$. Upon crossing a domain wall corresponding to $\bar{U}$, it will be transformed to an excitation $a_j$ corresponding to an error that anticommutes with $\mathcal{S}_j = \bar{U}\left(\mathcal{S}_i\right) \subset \mathcal{S}$.} As a concrete example, consider the case where $\bar{U}=\bar{H}$, a logical Hadamard gate on a single surface code. An electric charge $e$, which is a violation of an $X$-type stabiliser in the Hamiltonian, will be transformed to an excitation that is a violation of a $Z$-type stabiliser, that is, a magnetic flux.  Conversely, a magnetic flux crossing this domain wall will be transformed to an electric charge. Note that a domain wall of this type must be \emph{transparent}, meaning that excitations cannot condense from the vacuum at them, since $\bar{U}$ must fix the identity under conjugation. 

The correspondence between LPLOs and gapped, transparent domain walls is bijective for stacked codes. To see this, note first that there exists a unique domain wall associated with each LPLO. Specifically, we have seen that this domain wall will be that constructed by applying a restriction of the LPLO to a simply connected region of the code.

Conversely, a gapped transparent domain wall can be used to implement an LPLO, as follows. Consider a small such wall condensed out of the vacuum. We may then apply local operators which apply the symmetry by which the interior and exterior of the wall differ, to gradually grow the domain wall. Once it has grown to cover the full code then we will have applied a logical operator to the code. Specifically, a wall that transforms excitation $a_i$ to $a_j$ must have stabilisers transformed by $S_i$ to $S_j$ in its interior. Moreover, this operator must be locality-preserving. To see this, consider an error confined to a region, $R$, of the code. Now, grow the domain wall to contain all of the code except for this region, and only then grow the domain wall over this region. Then, the error cannot grow until the domain wall begins to close over it. At this point, all of the Hamiltonian apart from this region and the domain wall of width $2C$ has already been transformed as though the corresponding logical operator had been applied everywhere. Thus, the error can only grow to a region of width $2C$ greater than $R$, and so the logical operator is locality preserving. The LPLO corresponding to a domain wall is unique, since the unique way to implement a logical operator from the wall is to push it onto the boundaries of each surface code in which it has support.

\subsubsection{Finding Domain Walls and Logical Operators}
\label{BrEx}
\begin{figure}
\centering
\includegraphics[width=0.95\linewidth]{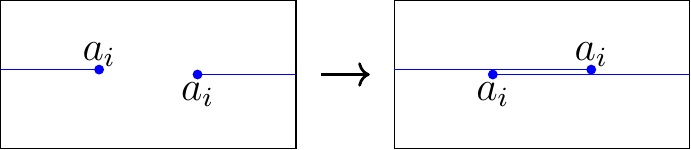}
\caption{Exchange of two excitations $a_i$ that exist as endpoints of a string of $U$ operators, shown in blue. The exchange of such excitations introduces a region between the excitations where $U^2$ is applied, since the strings associated with each excitation overlap. \label{fig:Exc1}}
\end{figure}

\begin{figure}
\centering
\includegraphics[width=0.8\linewidth]{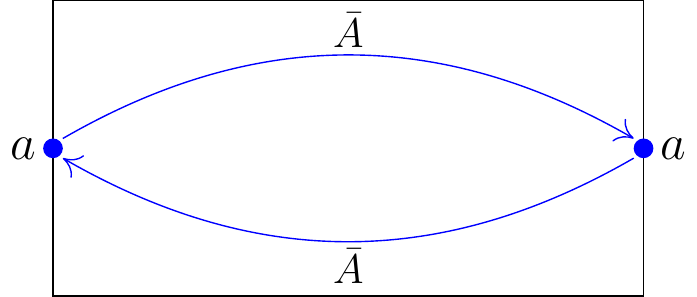}
\caption{Exchange of a pair of excitations $a$ on opposite boundaries of the two dimensional suface code is equivalent to applying the square of the corresponding logical operator, $\bar{A}$. \label{fig:Exc2}}
\end{figure}

The significance of this correspondence is that classifying gapped, transparent domain walls is a relatively easy task, whereas identifying LPLOs may not be. Specifically, gapped, transparent domain walls correspond to permutations of the set of excitations that preserve the structure of the topological model, i.e., of the fusion, exchange and braiding statistics~\cite{KitaevKong,Lanetal}. Ensuring that permutations of excitations preserve the fusion relations can be done automatically by considering the image of an independent generating set of the group of excitations (with fusion as the group multiplication). Exchange and braiding statistics for the codes we consider are encoded in the $T$ and $S$ matrices of the code.

The $T$ matrix encodes the self statistics of the anyonic theory.  It is a diagonal matrix with entries encoding the phase associated with exchanging a pair of identical particles. For example, the element will be 1 if the particle is a boson, or $-1$ for a fermion. In the context of a code, this phase may also be understood in terms of operators that may be applied to the qubits of the code to propagate the excitation. Specifically, assume excitation $a$ exists at an endpoint of a string of unitary operators $U$. Then, as shown in Fig.~\ref{fig:Exc1}, the exchange of two exciations $a$ will cause a string of operators $U^2$ to be applied between the two particles. Consider now the two excitations $a$ originating at opposite boundaries of a code, as shown in Fig.~\ref{fig:Exc2}. The exchange of these particles will then introduce a string of $U^2$ operators across the lattice. Since a string of $U$ operators implements the logical operator, $\bar{A}$, corresponding to excitation $a$, this exchange then applies the operator $\bar{A}^2$. In two dimensional codes, $\bar{A}$ will be a Pauli operator, and so $\bar{A}^2$ will be a phase.

The $S$ matrix, meanwhile, encodes the mutual exchange statistics, or braiding relations. Specifically, the matrix element $S_{i,j}$ encodes the phase associated with braiding particles $a_i$ and $a_j$ around each other, as shown in Fig.~\ref{fig:Braid1}. We may again understand this phase in terms of logical operators. The relevant object is the \textit{group commutator}, $K$ \cite{YoshidaA}. For unitary operators, $U$ and $V$, this is defined as $K(U,V)=UVU^\dag V^\dag$. Indeed, as Fig.~\ref{fig:Braid2} shows, the phase associated with braiding particles $a_i$ and $a_j$ is the group commutator of the corresponding logical operators, $\bar{A}_i$ and $\bar{A}_j$. Since Pauli operators either commute or anticommute, in two dimensional codes this commutator will be $\pm I$.

The interpretations we have provided of elements of the $S$ and $T$ matrices in terms of logical operators provide further insight into the relationship between domain walls and LPLOs. Specifically, the requirements that the domain wall preserve exchange and braiding relations can be seen to be equivalent to the requirement that conjugation by the corresponding LPLO, $\bar{U}$ commutes with taking the square of any logical operator, $\bar{A}$, and with taking the group commutator of any pair of logical operators, $\bar{A}$ and $\bar{B}$. Such requirements on logical operators are manifestly necessary, since, for any unitary operators $U$, $A$ and $B$, the following relations hold.
\begin{align}
(UAU^\dag)^2 &=UA^2 U^\dag\\
K(UAU^\dag , UBU^\dag ) &= UK(A,B)U^\dag
\end{align}
While the braiding and exchange relations of the excitations of two dimensional topological stabiliser codes are already well understood, this relationship between excitation statistics and logical operators will prove invaluable when considering the more complicated excitation structures of higher dimensional codes. For this reason, we introduce it here so that we may illustrate its working in the simpler, two dimensional case.

\begin{figure}
\centering
\includegraphics[width=0.5\linewidth]{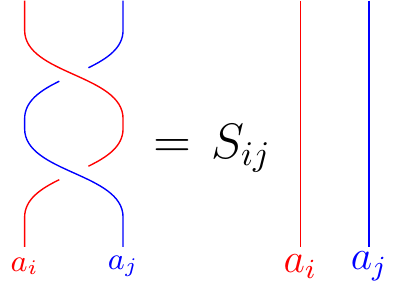}
\caption{Space (horizontal) time (vertical) diagram showing the world lines of two distinct anyons, $a_i$ and $a_j$, being braided. This braiding gives rise to a phase, $S_{ij}$. \label{fig:Braid1}}
\end{figure}

\begin{figure}
\centering
\includegraphics[width=0.9\linewidth]{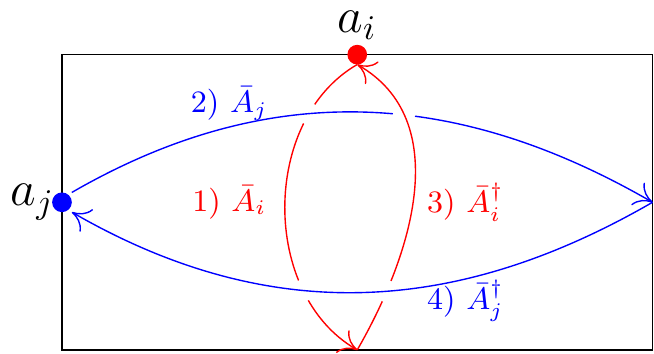}
\caption{Braiding of a pair of anyons $a_i$, $a_j$ in the two dimensional suface code is equivalent to applying the group commutator of their corresponding logical operators, $K(\bar{A}_i,\bar{A}_j) =\bar{A}_i\bar{A}_j\bar{A}_i^\dag \bar{A}_j^\dag$. \label{fig:Braid2}}
\end{figure}

\subsection{Examples in two dimensions}
\label{subsec:2Dexamples}
Here, we provide several examples in two dimensions of this equivalence between gapped domain walls and LPLOs.  We begin by considering the surface code and stacked codes, and then demonstrate how the domain walls we find in this case can also be used to determine the locality preserving gates for codes with other boundary conditions such as the colour code and the toric code.  Again, we follow Yoshida~\cite{YoshidaA}.

\subsubsection{The Surface Code}
As a first example, we consider the simplest stacked code: a single surface code.  We will identify all of the possible gapped domain walls in this code, and then use them to identify LPLOs.  To identify the domain walls, we first construct the $S$ and $T$ matrices, which encode the braiding and exchange relations of the the anyon excitations of the code.  The single surface code's anyonic properties are completely specified by the properties of three excitations:  bosons $e$ and $m$, and a fermion $em$.  Any distinct pair of $e$, $m$ and $em$ excitations have non-trivial braiding \cite{Kitaev}. Moreover, these braiding relations ensure that $em$ particles give a phase of $-1$ under exchange, since such an exchange is equivalent to the braiding of an $e$ particle with an $m$. These relations fix the $S$ and $T$ matrices to be as follows.
\begin{align} \label{eq:ST2DSC}
S &=   \kbordermatrix{  & 1 & e & m & em \\  1 & 1 & 1 & 1 & 1 \\
      e & 1 & 1 & -1 & -1 \\
      m & 1 & -1 & 1 & -1\\
     em & 1 & -1 & -1 & 1} \,\\
     T 
     &=\kbordermatrix{ & 1 & e & m & em \\
     \label{eq:ST2DSC2}
      1 & 1 & 0 & 0 & 0 \\
      e & 0 & 1 & 0 & 0 \\
      m & 0 & 0 & 1 & 0\\
     em & 0 & 0 & 0 & -1}\,
\end{align}

We may also consider an alternative approach to finding these matrices, by considering the corresponding logical operator relations discussed in Sec.~\ref{BrEx}. Specifically, we can identify $(1,e,m,em) \leftrightarrow (\bar{I},\bar{Z},\bar{X},\bar{Z}\bar{X})$. If we enumerate the anyons $a_i$ and denote their corresponding logical operators by $U_i$, we can then express elements of the $S$ and $T$ matrices in terms of logical operators as follows.
\begin{align}
S &=   \kbordermatrix{ & 1 & e & m & em \\
      1 & K(I,I) & K(I,X) & K(I,Z) & K(I,XZ) \\
      e & K(X,I) & K(X,X) & K(X,Z) & K(X,XZ) \\
      m & K(Z,I) & K(Z,X) & K(Z,Z) & K(Z,XZ)\\
     em & K(XZ,I) & K(XZ,X) & K(XZ,Z) & K(XZ,XZ)}\\
     T &=\kbordermatrix{ & 1 & e & m & em \\
      1 & I^2 & 0 & 0 & 0 \\
      e & 0 & X^2 & 0 & 0 \\
      m & 0 & 0 & Z^2 & 0\\
     em & 0 & 0 & 0 & (ZX)^2}
\end{align}
Evaluating these matrices, we indeed find the same results as in Eq.~\ref{eq:ST2DSC} and Eq.~\ref{eq:ST2DSC2}. The generalisation of this alternative approach to higher dimensional codes with more complicated excitation structures will prove invaluable in Sec.~\ref{IV} and \ref{V}.

With the $S$ and $T$ matrices identified, we now identify domain walls by finding automorphisms of anyons that preserve braiding and exchange relations. These correspond to tunnelling matrices that commute with the $S$ and $T$ matrices \cite{Lanetal}. It is simple to see that there are two such walls, the trivial wall $W_0$, and a nontrivial wall $W_1$ with tunnelling matrix:
\begin{equation}
W_1=    \kbordermatrix{ & 1 & e & m & em \\
      1 & 1 & 0 & 0 & 0 \\
      e & 0 & 0 & 1 & 0 \\
      m & 0 & 1 & 0 & 0\\
     em & 0 & 0 & 0 & 1}\,.
\end{equation}
The domain wall $W_1$ interchanges $e$ and $m$ type excitations. Since these excitations correspond to logical operators $\bar{Z}$ and $\bar{X}$ respectively, we may thus conclude that the logical operator, $\bar{U}_1$, corresponding to the wall is that which acts as follows.
\begin{align}
\bar{U}_1 \bar{X}\bar{U}_1^\dag &=\bar{Z} \,,\\
\bar{U}_1 \bar{Z}\bar{U}_1^\dag &=\bar{X} \,.
\end{align}
That is, $\bar{U}_1$ is the logical Hadamard operator. Thus, we may conclude that the only LPLO admissible in a single two dimensional surface code is the logical Hadamard operator. 

\subsubsection{The Stacked Code}
\label{sec:2DStack}
We now consider a more general two dimensional stacked code, consisting of a stack of $n$ surface codes. To begin, we again construct the $S$ and $T$ matrices of the code. To do this, observe that excitations in distinct surface codes will have trivial braiding relations, and so the matrices will be tensor products of the single surface code matrices.
\begin{align}
S &=   \left(\kbordermatrix{  & 1 & e & m & em \\  1 & 1 & 1 & 1 & 1 \\
      e & 1 & 1 & -1 & -1 \\
      m & 1 & -1 & 1 & -1\\
     em & 1 & -1 & -1 & 1}\right)^{\otimes n} \label{eq:S}\\
     T 
     &=\left(\kbordermatrix{ & 1 & e & m & em \\
      1 & 1 & 0 & 0 & 0 \\
      e & 0 & 1 & 0 & 0 \\
      m & 0 & 0 & 1 & 0\\
     em & 0 & 0 & 0 & -1}\right)^{\otimes n} \label{eq:T}
\end{align}
Again, the results of this approach agrees with the matrices we would get by associating anyons $e_j$, $m_j$ from code $j$ in the stack with the Pauli operators $Z_j,X_j$ of this code. This is because, for $U,V,U',V'$ acting on spaces of the same dimension, we may decompose group commutators and squares as
\begin{align}
K(U \otimes V, U' \otimes V') &= K(U,U')\otimes K(V,V') \,,\\
(U \otimes V)^2 &= U^2\otimes V^2\,.
\end{align}

We can then use these $S$ and $T$ matrices to find the full set of gapped, transparent domain walls admitted by the code. For the case $n=2$, we can do this explicitly, and find a group of 72 domain walls \cite{YoshidaA}, generated by the following walls.
\begin{align}
h_1: &\, e_1 \leftrightarrow m_1\\
h_2: &\, e_2 \leftrightarrow m_2\\
s_{12}: &\, m_1 \to m_1e_2, m_2 \to e_1 m_2
\end{align}
Note that here, and throughout, we describe domain walls by their action on electric charges and magnetic fluxes, and omit those charges and fluxes which are unaffected by the wall. These walls correspond to logical Hadamard operators on each of the two surface codes in the stack, and a logical controlled-Z operator between the codes, as can be seen by the following actions on logical Pauli operators:
\begin{align}
\bar{H}_1: &\, \bar{Z}_1 \leftrightarrow \bar{X}_1\\
\bar{H}_2: &\, \bar{Z}_2 \leftrightarrow \bar{X}_2\\
\overline{CZ}_{12}: &\, \bar{X}_1 \to \bar{X}_1\bar{Z}_2, \bar{X}_2 \to \bar{Z}_1 \bar{X}_2
\end{align}

For $n>2$, this generalises to the following group of gapped, transparent domain walls.
\begin{align}
h_i: &\, e_i \leftrightarrow m_i\\
s_{ij}: &\, m_i \to m_ie_j, m_j \to e_i m_j
\end{align}
and corresponding LPLOs
\begin{align}
\bar{H}_i: &\, \bar{Z}_i \leftrightarrow \bar{X}_i\\
\overline{CZ}_{ij}: &\, \bar{X}_i \to \bar{X}_i\bar{Z}_j, \bar{X}_j \to \bar{Z}_i \bar{X}_j
\end{align}
To see this, note that domain walls cannot map $e$ or $m$ type excitations to $em$ excitations, since the exchange statistics of $em$ excitations differ from those of $e$ and $m$. This corresponds to the requirement that real Pauli operators must remain real under conjugation by LPLOs of the code. Thus, the LPLOs of the code must be contained in the real Clifford group, which is generated by Hadamard and controlled-$Z$ operators \cite{Gajewski}. Indeed, since we have seen that the Hadamard and controlled-$Z$ operators are locality preserving, the set of LPLOs is exactly the real Clifford group.

\subsubsection{The Colour Code}

We now consider our first example which is not a stacked code: the two dimensional colour code on a triangular lattice. Such a code is locally equivalent to a surface code which is folded onto itself, or equivalently, to two triangular surface codes that share a nontrivial common boundary~\cite{Kubica}, as shown in Fig.~\ref{fig:2DCC}. Along this common boundary, an anyon from one code can be transformed into an anyon of the same type from the other code. The other boundaries are rough for code one and smooth for code two, and vice versa. The code encodes a single logical qubit.

These boundary conditions establish an equivalence relation between the anyon theories of the two codes. In particular, excitations that may be interchanged at a boundary joining the two codes are considered equivalent. In this case, these boundary conditions mean that $e_1$ and $e_2$ are equivalent excitations (now denoted just $e$), and similarly for the magnetic fluxes.  This equivalence relation has a non-trivial effect on how the domain walls relate to LPLOs.  The domain walls in this code will be the same as those found for the case of two stacked surface codes, but some domain walls will now be equivalent in the sense that they correspond to the same LPLO.

Consider the domain walls $h_1$ and $h_2$, identical to those in the $n=2$ stacked code.  Both of these domain walls interchange $e$ and $m$, and so they both correspond to the same LPLO, $\bar{H}$, acting as a logical Hadamard on the single encoded qubit of the colour code.  The $s_{12}$ domain wall that maps $m$ to $em$ now corresponds to a logical operator mapping $\bar{X}$ to $\bar{Y}\propto\bar{Z}\bar{X}$, since we do not distinguish the codes from which the $e$ and $m$ come in the image of $m$. The LPLOs are thus as follows, where we specify only mappings of $\bar{X}$ and $\bar{Z}$ that are non-trivial.
\begin{align}
\bar{H}: &\, \bar{Z} \leftrightarrow \bar{X}\\
\bar{R}_2: &\, \bar{X} \to \bar{Y}
\end{align}
These operators generate the single qubit Clifford group \cite{Terhal}. We follow Yoshida's convention of referring to the gate that interchanges $\bar{X}$ and $\bar{Y}$ as $\bar{R}_2$ \cite{YoshidaA}, since the notation is easy to generalise to the gates that will arise when we consider higher dimensional colour codes. Specifically, we throughout denote by $R_k$ the single qubit operator which is diagonal with elements $1$ and $e^{2\pi i/k}$ with respect to the computational basis. Note, however, that this $\bar{R}_2$ gate is the phase gate, which is often also denoted by $P$ or $S$ in the literature.

If we consider a stack of colour codes, then it is clear that we will also get $\overline{CZ}$ operators that act between each pair of codes. Indeed, this corresponds to the $\overline{CZ}$ operators that existed for a stacked code, acting between surface codes that are not folded into the same colour code. Thus, a stack of colour codes admits a locality-preserving implementation of the full Clifford group, generated by $\bar{H}$, $\bar{R}_2$, and $\overline{CZ}$, as was known previously via specific transversal implementations of these logical gates~\cite{Kubica}.

\begin{figure}
\centering
\includegraphics[width=0.95\linewidth]{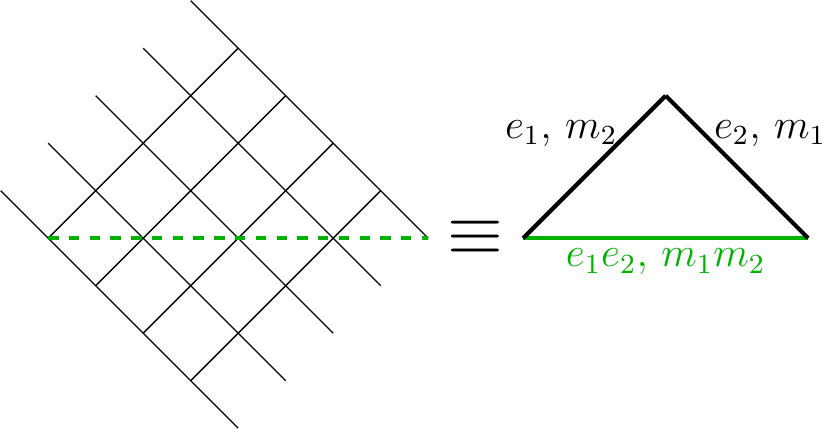}
\caption{Boundary conditions to encode a single qubit in the colour code. The green dashed line on the left is where the surface code should be folded to attain the colour code on the right. Such a fold gives two boundaries where there is a rough edge folded over a smooth one, labelled by the excitations that can condense there, where there is a rough edge folded over or under a smooth one. The third boundary is the common boundary where two anyons of the same type may condense together, or, equivalently, where excitations from different codes may be interchanged. \label{fig:2DCC}}
\end{figure}

\subsubsection{The Toric Code}
As a final two dimensional example, we may consider a genuine toric code, i.e., with boundary conditions corresponding to a torus. Such a code encodes two logical qubits, corresponding to anyons following topologically distinct and non-trivial loops around the torus. We then have logical operators $\bar{Z}_1$ and $\bar{X}_2$ corresponding to excitations following one of the classes of non-trivial loops, and $\bar{Z}_2$ and $\bar{X}_1$ corresponding to them following the other, as shown in Fig.~\ref{fig:TC}.

Growing a domain wall across a code will preserve paths followed by excitations, but can interchange the excitations. Thus, the $h$ domain wall that we have in the surface code corresponds to a logical operator which interchanges $\bar{Z}_1 \leftrightarrow \bar{X}_2$ and $\bar{X}_1 \leftrightarrow \bar{Z}_2$, and so corresponds to a Hadamard operator on both logical qubits followed by a SWAP operator on the qubits. This is therefore the only LPLO admitted by the code.

\begin{figure}
\centering
\includegraphics[width=0.8\linewidth]{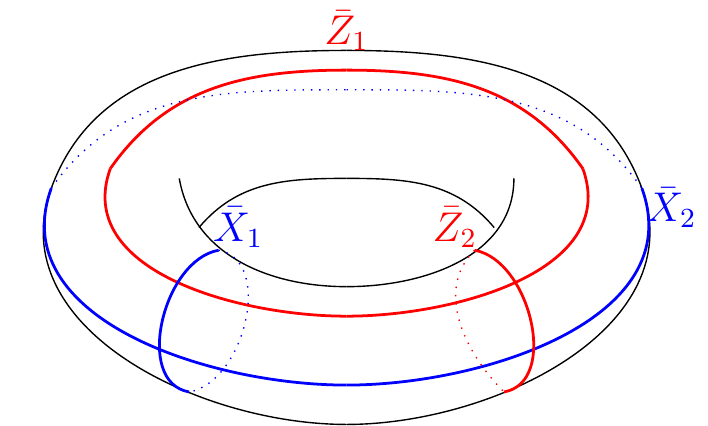}
\caption{$\bar{X}$ (blue) and $\bar{Z}$ (red) operators of the toric code. Since there are two pairs of such operators, the code encodes two logical qubits. \label{fig:TC}}
\end{figure}

\section{Extending to Higher Dimensions: Concepts and Examples}
\label{IV}
We now turn to extending these connections between domain walls and LPLOs to higher dimensional topological stabiliser codes.  As first illustrated by Yoshida for the 3D colour code~\cite{YoshidaA}, higher dimensional codes have a much richer structure, both in terms of their excitations and the generalization of domain walls.  In this section, we will outline several aspects of this additional structure, and use a range of three dimensional codes as examples to demonstrate the relationships with LPLOs.  Our complete analysis of the relationship between domain walls and LPLOs in topological stabiliser codes of higher dimension is left to Sec.~\ref{V}.

\subsection{New Concepts}
\subsubsection{Higher Dimensional Excitations}
The first source of additional structure in higher-dimensional topological stabiliser codes is the dimensionality of low-energy excitations.  Excitations in two dimensional topological codes are point-like anyons.  For higher dimensional codes, excitations can be extended objects of one or more dimensions, such as loops and surfaces.  Much like anyons, these excitations can possess non-trivial topological braiding relationships exhibited through sophisticated effects such as three loop braiding \cite{WangLevin}.

A natural starting point to encapsulating these relations between excitations is by generalising the exchange and braiding relations we considered in two dimensions. Specifically, exchange statistics of excitations, and braiding of pairs of excitations, whether point-like or higher dimensional, may be expected to still be preserved by domain walls in higher dimensions. This expectation is justified by the  fact that the corresponding logical operator actions; taking the square of an operator, and taking the group commutator of a pair of operators, must still commute with conjugation by a logical operator in higher dimensions. Indeed, to avoid the complications of considering the differing geometries of braiding and exchange involving higher dimensional excitations, we will simply consider the squares and commutators of logical operators directly, without seeking to explicitly construct $S$ and $T$ matrices that fully describe the statistics of all the excitations of the code. We will, however, implicitly consider elements of such matrices, by denoting the commutator associated with braiding of excitations $a$ and $b$ by $S_{a,b}$, and the square associated with exchange of a pair of excitations $a$ by $T_{a,a}$.

We may expect that more complex processes, such as three loop braiding, must also be considered in higher dimensional codes \cite{WangLevin,YoshidaA}. Such processes correspond to nested commutators \cite{YoshidaA,YoshidaC}. In fact, however, the preservation of braiding and exchange relations of eigenstate excitations, corresponding to the square and commutator of Pauli operators, implies that such a nested commutator is trivial for the codes we consider. This is because the commutator of two Pauli operators must be a phase, and so the commutator of the image of these Pauli operators must also be this phase. Thus, this commutator commutes with all other operators, giving a trivial nested commutator. This same argument can also be extended to show that all relevant higher level nested commutators will also be trivial. Note that three loop braiding relations involving non-eigenstate excitations can be non-trivial, however. Indeed, Yoshida has showed that such relations can provide insight into what combinations of excitations may condense from the vacuum at opposite sides of a domain wall \cite{YoshidaA}. Using these relations is not essential, however, as the same insight can be attained by comparing the exchange relations of the excitations emerging from each side of the wall.

\subsubsection{Generalised Domain Walls}
\label{IVA2}
To achieve our central goal of classifying LPLOs in higher dimensional codes, we now wish to generalise the correspondence between domain walls and LPLOs that we have in the two dimensional case. To do this, consider first the case of $d$ dimensional LPLOs in $d$ dimensional codes. Such operators will have boundaries as domain walls, just as in the two dimensional case. Moreover, the arguments that gave us a one-to-one correspondence of LPLOs and gapped, transparent domain walls in the two dimensional case will carry through to higher dimensions as well. These domain walls transform excitations in a way that preserves the topological relationships, as in the case of domain walls in two dimensional codes. The only additional requirement that emerges for these domain walls is that they must preserve the dimensionality of excitations that cross them, as shown in Fig.~\ref{fig:2DDW}. This is because the logical operator that gives rise to such a wall must be locality preserving. Thus, a lower dimensional excitation must not grow to a higher dimensional one upon entering the region acted on by that operator. Since the wall must be invertible, this also means that higher dimensional excitations cannot be transformed to lower dimensional ones.

Higher dimensional codes, however, may also admit LPLOs of lower dimension $k<d$. Such operators will have boundaries as $k-1$ dimensional objects with some properties in common with domain walls. The key difference between these objects and true (codimension one) domain walls is that they do not partition the lattice. Instead, they partition a subspace of the lattice. We will refer to these types of objects as \textit{generalised domain walls}, since they are not true domain walls, but share some important properties with them. Since they only partition a subspace of the lattice, point-like excitations may travel from one side of such a generalised domain wall to the other by passing out of this subspace. This is not necessarily true of higher dimensional excitations, however. For example, an extended one dimensional excitation must cross through a one dimensional generalised domain wall at a point, as shown in Fig.~\ref{fig:1DDW}. This generalised domain wall may thus act like a domain wall on the one dimensional excitation that passes through it, but one that only acts at the point on the excitation that actually passes through the generalised domain wall. Thus, such a generalised domain wall must act trivially on a point-like excitation, since such an excitation may avoid it, but can transform an extended one dimensional excitation at a point. Equivalently, we may view this as appending some point-like excitation to the one dimensional excitation.

More generally, a $j$ dimensional excitation passing through a $k$ dimensional boundary of an LPLO in a $d$ dimensional code will intersect with the boundary if and only if $j+k-d-1 \geq 0$. Otherwise the boundary must act trivially on the excitation. If they do intersect, they will do so at a region of dimension $l=j+k-d-1$, and so this is the dimension of the excitation that may be transformed by the generalised domain wall. Thus, $k$-dimensional generalised domain walls correspond to permutations of excitations that preserve the topological relations, and also act on a $j+k-d-1$ dimensional region of each $j$ dimensional excitation.  We conclude that we can classify LPLOs in higher dimensional codes in an analogous way to in two dimensions: by finding automorphisms of code excitations, with the additional condition that allowed automorphisms must act on excitations in a way consistent with a generalised domain wall of a particular dimension.

\begin{figure}
\centering
\includegraphics[width=0.8\linewidth]{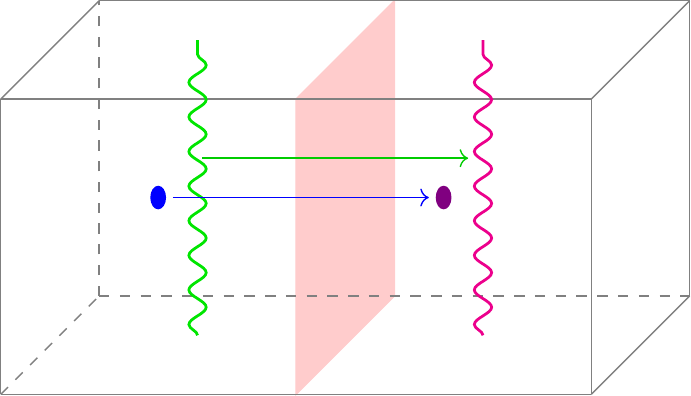}
\caption{A (true) domain wall in a three dimensional code partitions the code into two parts. Zero and one dimensional excitations crossing the wall can be transformed to other excitations of the same dimension. \label{fig:2DDW}}
\end{figure}

\begin{figure}
\centering
\includegraphics[width=0.8\linewidth]{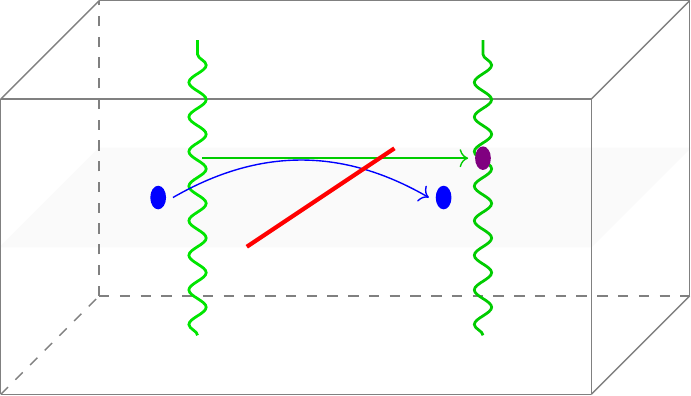}
\caption{A one dimensional generalised domain wall partitions only a two dimensional subspace of a three dimensional code into two parts. Zero dimensional excitations may thus travel from side of the wall to the other without crossing it, while one dimensional excitations extended orthogonally to the wall across the whole lattice must cross it at a point. \label{fig:1DDW}}
\end{figure}

\subsubsection{Non-Eigenstate Excitations}
One further complication remains. Specifically, boundaries of LPLOs---domain walls and their lower dimensional counterparts---can themselves be viewed a type of excitation of the code. To see this, consider applying the restriction of an LPLO, $\bar{U}$, to a region of the code. This action will give rise to an excitation, localised to the boundary of this region.  Note that the ``excitation'' will not in general be an energy eigenstate of the code, but rather a superposition.  

Nonetheless, ``excitations'' of this form may be stable despite not having a well-defined energy, as it may not be possible to remove them locally. To distinguish them from the excitations that are energy eigenstates of the code, we refer to them as \textit{non-eigenstate excitations}. In general, removing such a non-eigenstate excitation requires the application of the inverse of the LPLO on the entire interior region.  To understand how this makes them stable, consider applying a two dimensional LPLO to a region of a two dimensional code that is not simply connected, for example a ring around a torus as shown in Fig.~\ref{fig:TCLO}. Such an operator now gives rise to two topologically non-trivial loop non-eigenstate excitations as its boundary. These loops cannot be removed locally, but only by acting in the region between them to bring them together. Thus, these non-eigenstate excitations can only be created or annihilated in pairs, as is the case for electric charges and magnetic fluxes. This implies that a single non-trivial loop non-eigenstate excitation cannot be removed locally, and so will be stable. On a surface code, which exists on a simply connected lattice, we may consider instead the case of such an excitation extending between two opposite boundaries, as shown in Fig.~\ref{fig:SCLO}. Such a non-eigenstate excitation can only be removed by pushing it onto a boundary where it can be absorbed, just as for the eigenstate excitations of the code. 

\begin{figure}
\centering
\includegraphics[width=0.8\linewidth]{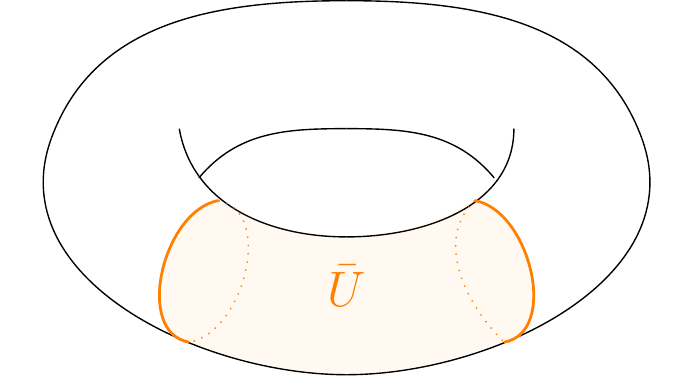}
\caption{Applying the restriction of an LPLO, $\bar{U}$, to a topologically non-trivial region of a torus gives rise to two non-eigenstate excitations. These can only be created or annihilated in pairs, and so a single such non-eigenstate excitation is stable. \label{fig:TCLO}}
\end{figure}

\begin{figure}
\centering
\includegraphics[width=0.7\linewidth]{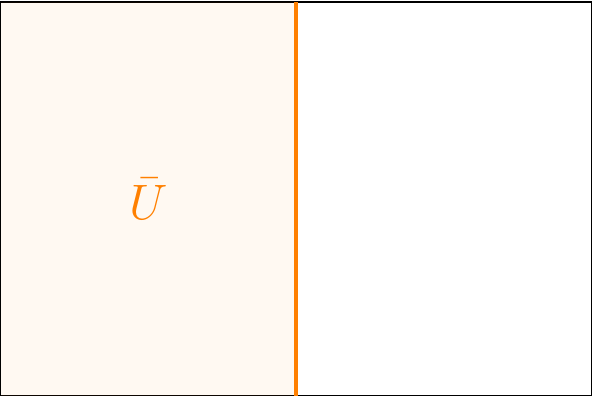}
\caption{Applying the restriction of an LPLO to a region stretching to three boundaries of a surface code gives rise to a single eigenstate excitation that extends between two opposite boundaries, and can only be removed by being pushed on to one of the other boundaries. \label{fig:SCLO}}
\end{figure}

These non-eigenstate excitations are important as they must be considered along with eigenstate excitations of the code as objects that may be permuted by generalised domain walls. For example, we may imagine a generalised domain wall that takes an extended, one dimensional eigenstate excitation to a one dimensional non-eigenstate excitation. We will see examples of such walls when we consider the three dimensional codes below. Note that walls of this type are not possible in two dimensions since all eigenstate excitations in such codes are point-like and so cannot be transformed into extended non-eigenstate excitations by any domain wall. This is significant, since generalised domain walls that only permute eigenstate excitations correspond to Clifford logical operators. Indeed, we refer to them as \textit{Clifford domain walls}. Thus, the fact that two dimensional topological stabiliser codes admit only Clifford domain walls explains why they also admit only Clifford LPLOs. We revisit this relationship between the dimensionality of codes and the Clifford hierarchy in more detail in Sec.~\ref{VA}. We first however, in Sec.~\ref{IVB}, explore through examples the greater range of LPLOs that are possible in three dimensional topological stabiliser codes.

\subsection{Three Dimensional Codes}
\label{IVB}
In this section, we illustrate the new concepts and structures we have described above by exploring several three dimensional codes. We begin by considering stacked codes. We then consider the three dimensional colour code, which may be viewed as three surface codes of three dimensions folded into one another. This code was also considered by Yoshida in \cite{YoshidaA}, but we provide a more complete treatment of it, and show how it fits into our broader framework. We also consider an example of a code which is not locally equivalent to any number of toric codes, but which has a sufficiently similar structure to still allow for analysis; the Levin-Wen fermion model.

\subsubsection{Three Dimensional Surface Code}
We consider first a single three dimensional surface code. Recall that such a code has one type of eigenstate excitation as point-like, while the other is an extended one dimensional excitation. For simplicity, we will assume that it is the magnetic flux that is one dimensional, while the electric charge is point-like. We still identify electric charges crossing the lattice with $\bar{Z}$ operators and magnetic fluxes with $\bar{X}$ operators. Thus, all the braiding and exchange relations of eigenstate excitations coincide with the same group commutators and squares as they did in two dimensions. This means that the same constraints that we had on domain walls in the two dimensional surface code must also be satisfied by Clifford domain walls in the three dimensional code. Recall that the only non-trivial domain wall admitted by the two dimensional code was $h: e \leftrightarrow m$. Thus, this is the only candidate for a non-trivial Clifford domain wall in the three dimensional code. In this case, however, the electric charges and magnetic fluxes are of different dimensions. Since no domain wall may increase the dimension of an excitation that crosses it, therefore, the $h$ wall cannot be realised in the three dimensional code. Thus, we conclude that there are no non-trivial domain walls admitted by the three dimensional surface code. This further implies that no LPLOs are admitted by this code, or indeed by any other code that is locally equivalent to the surface code in the bulk, such as the three dimensional toric code.

\subsubsection{Three Dimensional Stacked Code}
\label{IVB2}
A far more interesting example comes if we consider instead a larger stack of $n$ three dimensional surface codes. Considering first only eigenstate excitations, we have that the same braiding and exchange relations will hold as do for the two dimensional stacked code. This is because the eigenstate excitations correspond to the same Pauli logical operators as in two dimensions. The same braiding and exchange constraints thus apply to domain walls in this code as do for the two dimensional stacked code, and thus candidate domain walls in this code must be products of the following walls we calculated in section \ref{sec:2DStack}.
\begin{align}
h_i: &\, e_i \leftrightarrow m_i\\
s_{ij}: &\, m_i \to m_ie_j, m_j \to e_i m_j
\end{align}

To see which of these candidate walls are indeed allowed, we must now consider, in turn, which permutations may be realised as two dimensional domain walls, which as one dimensional (generalised) domain walls and which are inconsistent with walls of either dimension. Consider first two dimensional walls. These may interchange different point-like excitations and also interchange different one dimensional excitations. Specifically, it may interchange different electric charges and interchange different magnetic fluxes. The subset of the candidate walls that are of this form are generated by walls $c_{ij}$: $m_i \to m_im_j$, $e_j \to e_i e_j$. To see this, note that such a wall corresponds to the logical operator $\overline{CNOT}_{ij}$: $\bar{X}_i \to \bar{X}_i \bar{X}_j$, $\bar{Z}_j \to \bar{Z}_i \bar{Z}_j$ which, along with Pauli operators, generates the group of operators that preserve both of the sets of Pauli operators consisting of only $I$ and $X$ type operators, and only $I$ and $Z$ type operators, respectively \cite{Delfosse}. Next, consider one dimensional generalised domain walls. Such walls may append point-like electric charges to one dimensional magnetic fluxes, but must act trivially on electric charges. The group of such domain walls are generated by $s_{ij}$ walls. As for the two dimensional code, these walls correspond to control-$Z$ operators, $\overline{CZ}_{ij}$. The action of this corresponding operator is shown in Fig.~\ref{3DCZ}. 

Thus, the Clifford domain walls admitted by a three dimensional stacked code are generated by the following walls, where we relabel $s_{ij}$ to $s_{ij}^{(2)}$ in anticipation of results in the following paragraphs.
\begin{align}
c_{ij}:&\, m_i \to m_im_j,\, e_j \to e_i e_j\\
s_{ij}^{(2)}:&\, m_i \to m_ie_j,\, m_j \to e_i m_j
\end{align}

We may now consider these generalised domain walls as (non-eigenstate) excitations. Specifically, the $c_{ij}$ walls are two dimensional excitations. Since this is a higher dimension than any of the eigenstate excitations, they cannot be mapped to by eigenstate excitations crossing any domain wall. Thus, this type of wall does not yield any further domain walls or LPLOs. The $s_{ij}^{(2)}$ type wall, however, is one dimensional. Since this is of the same dimension as the magnetic fluxes of the code, there is a possibility of having two dimensional domain walls that map magnetic fluxes to non-eigenstate excitations. Note that we refer to such domain walls as ``non-Clifford'' domain walls, since we will see that they correspond to LPLOs that are outside the Clifford group.

To determine if this possibility is realised, we consider the types of non-Clifford domain walls that are not forbidden by dimensional considerations, and test if they preserve relevant braiding and exchange relations. Note that since we have walls $w_{ij}=c_{ij}c_{ji}c_{ij}: e_i \leftrightarrow e_j,\, m_i \leftrightarrow m_j$ that swap the codes from which excitations come, we may assume our walls act trivially on electric charges. This is because any wall which interchanges different electric charge excitations will then differ from the walls we consider by a product of $w_{ij}$ walls.  Now, consider first a wall that maps $m_i \to s^{(2)}_{j,k}$, where $j \neq k$, but we make no assumptions about $i$. Compare the braiding phases $S_{m_i,e_i}=K(Z_i,X_i)=-1$ and $S_{s^{(2)}_{j,k},e_i}=K(CZ_{j,k},Z_i)=1$. For the domain wall to preserve braiding statistics we would require these two braiding phases to be equal. Thus, that they are not equal implies domain walls of type $m_i \to s^{(2)}_{j,k}$ cannot be allowed.

The only other type of possible wall maps a magnetic flux to a composite excitation of a magnetic flux and $s$ type excitation. Specifically, consider such a wall that again acts trivially on electric charges, but maps $m_{i} \to m_l s^{(2)}_{jk}$. Now consider again the braiding phase $S_{m_i, e_i}=-1$, and compare it to $S_{m_l s^{(2)}_{j,k}, e_i}= K(X_l CZ_{j,k}, Z_i)=(-1)^{\delta_{i,l}}$. We thus conclude that the preservation of braiding relations requires that $l=i$, and so we need consider only domain walls that act as $m_i \to m_i s^{(2)}_{j,k}$. Next consider $T_{m_i,m_i}= (X_i)^2=1$ compared to $T_{m_is_{j,k},m_is_{j,k}}= (X_iCZ_{j,k})^2=1$ if and only if $i \neq j,k$. This comparison implies that $i \neq j,k$ and so any possible non-Clifford domain walls must be of the form $m_i \to m_i s^{(2)}_{j,k}$ up to multiplication by Clifford walls. Since we have seen that this preserves exchange relations, it remains only to verify that it also preserves braiding relations to conclude that it is indeed an allowed wall. Since it preserves electric charges it will preserve braiding relations between them. Thus, we need only consider that $S_{m_is^{(2)}_{jk},e_l}=K(X_i CZ_{j,k}, Z_l)= CZ_{j,k} K(X_i, Z_l)CZ_{j,k}=K(X_i,Z_l)=S_{m_i,e_l}$ and 
\begin{align}
S_{m_{i}s^{(2)}_{jk},m_{j}s^{(2)}_{ki}} &= K(X_j CZ_{ki}, X_i CZ_{jk}) \notag\\
&= CZ_{ki}X_j CZ_{jk}X_i X_j CZ_{ki} X_i CZ_{jk} \notag\\
&= CZ_{ki}X_j X_i CZ_{jk}X_j CZ_{jk} CZ_{ki}X_i \notag\\
&= X_j CZ_{ki} X_i X_j Z_k CZ_{ki} X_i \notag\\
&= X_j CZ_{ki}X_i CZ_{ki} X_j Z_k X_i \notag \\
&= X_j X_iZ_k X_jZ_kX_i \notag \\
&=X_i^2X_j^2Z_k^2\notag \\
&=1\notag \\
&= S_{m_{i},m_j}
\end{align}
So, braiding relations are indeed preserved and so the wall is allowed.

Thus, we may conclude that the generalised domain walls of the code are generated by the following three types of walls.
\begin{align}
c_{ij}:&\, m_i \to m_im_j,\, e_j \to e_i e_j\\
s_{ij}^{(2)}:&\, m_i \to m_ie_j,\, m_j \to e_i m_j\\
s_{ijk}^{(3)}:&\, m_i \to m_i s^{(2)}_{jk},\, m_j \to m_j s^{(2)}_{ki},\, m_k \to m_k s^{(2)}_{ij}
\end{align}
We have already identified that the first two of these walls correspond in the stacked code to $\overline{CNOT}$ and $\overline{CZ}$ operators. The third maps $X_i \to X_i CZ_{jk}$, $X_j \to X_j CZ_{ki}$, $X_k \to X_k CZ_{ij}$, and so is a control-control-$Z$ operator, $\overline{CCZ}$ \cite{YoshidaC}. The LPLOs of the three dimensional stacked code are therefore generated by $\overline{CNOT}, \overline{CZ}, \overline{CCZ}$.

\begin{figure}
\centering
\includegraphics[width=0.95\linewidth]{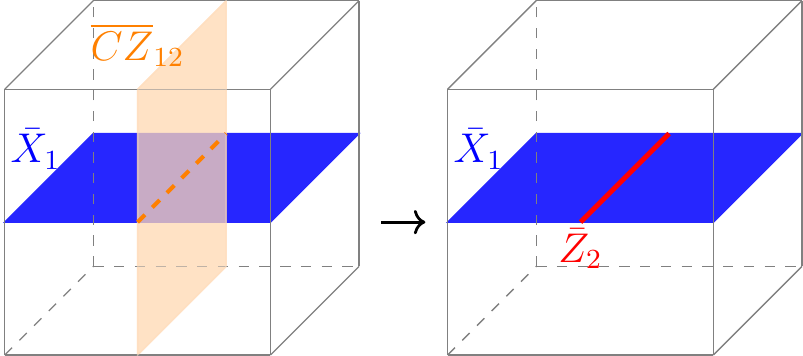}
\caption{The two dimensional $\overline{CZ}_{12}$ in the three dimensional stacked code with three surface codes, which is equivalent to a three dimensional colour code on a cubic lattice \cite{Kubica} acts on the two dimensional Pauli operator, $\bar{X}_1$, by appending a one dimensional $\bar{Z}_2$ operator to it.\label{3DCZ}}
\end{figure}

\subsubsection{Three Dimensional Colour Code}
We now consider three surface codes on tetrahedral lattices in three dimensions attached along a two dimensional boundary. Specifically, we identify the three adjacent faces from each of the codes all with a single face which allows excitations of the form $e_ie_je_k$ and $m_im_j$ to condense, for $i \neq j \neq k$. Excitations of the form $e_ie_j$ may condense along one dimensional edges of the tetrahedral lattice. The code encodes a single logical qubit. Kubica \textit{et al} have shown that it is locally equivalent to the colour code on a tetrahedral lattice \cite{Kubica}.

Similarly to the case of the two dimensional colour code, we may determine the LPLOs admitted by the code by considering excitations that may be interchanged at boundaries of the code to be equivalent. Specifically, we consider all electric charges of the code to be equivalent and label them $e$, and all magnetic fluxes of the code also equivalent, labelled $m$. We can then consider the action of the domain walls we have found on these equivalence classes of excitations. Specifically, the $c_{ij}$ type domain wall only interchanges excitations that are equivalent, and so is equivalent to the trivial wall. Thus, we get no non-trivial logical operator from this wall. The $s_{ij}^{(2)}$ wall, however, appends an electric charge to a magnetic flux. In terms of equivalence classes, therefore, the wall acts as $s^{(2)}: m \leftrightarrow em$. Since magnetic fluxes of the code correspond to the $\bar{X}$ operator on the logical qubit and $em$ to $\bar{Y} \propto \bar{X}\bar{Z}$, the corresponding logical operator for this wall is thus $\bar{R}_2: \bar{X} \leftrightarrow \bar{Y}$. The action of this operator is shown in Fig.~\ref{fig:3DR2}.

We may now consider the $s^{(3)}_{ijk}$ type domain wall. Again, considering excitations of the same type, but from different codes to be equivalent, this wall can be viewed as $s^{(3)}: m \to ms^{(2)}$. This then corresponds to the LPLO $\bar{R}_3: \bar{X} \to \bar{X}\bar{R}_2$ \cite{YoshidaA}. Since $R_3^2=R_2$, the LPLOs are therefore generated by $\bar{R}_3$ alone.

\begin{figure}
\centering
\includegraphics[width=0.95\linewidth]{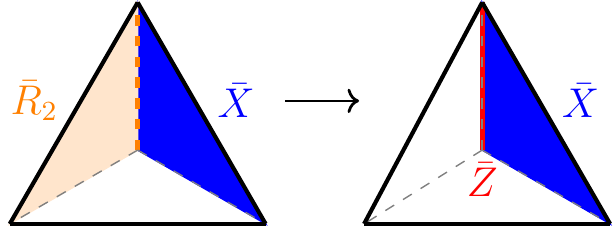}
\caption{The $\bar{R}_2$ LPLO acting on a $\bar{X}$ operator in the three dimensional colour code on a tetrahedral lattice (equivalent to three three-dimensional surface codes attached along a face). \label{fig:3DR2}}
\end{figure}

\subsubsection{Levin-Wen Fermion Model}
\label{Sec:LW}
We now consider an example of a code that is not locally equivalent to any number of toric codes, but nonetheless can be analysed using the framework we have developed. The Levin-Wen fermion model \cite{LW} is a topological stabiliser code which can be defined on a cubic lattice of two qubit sites, of total lattice side length, $L$. The stabilisers are four site operators \cite{Haah}, as shown in Fig.~\ref{fig:LWStab}.

\begin{figure}
\centering
\includegraphics[width=0.95\linewidth]{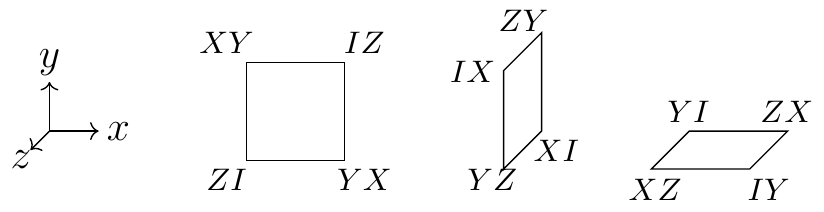}
\caption{The stabilisers of the Levin-Wen Fermion Model are operators that act across the four two qubit sites that lie around each face of a cube in the lattice. There are three different types of stabilisers, each of which act at each face in the lattice of a particular orientation (parallel to the $xy$, $yz$ and $zx$ planes respectively). \label{fig:LWStab}}
\end{figure}

The excitation structure of the code is similar in some ways to that of the three dimensional surface code.  There exists a point-like excitation, which we refer to as an electric charge, and an extended one-dimensional excitation, which we refer to as a magnetic flux. This code differs from the surface code in its exchange statistics, however. Specifically, the electric charge is a fermion, rather than a boson~\cite{LW}. 

The code also has a different logical operator structure depending on the parity of $L$. Specifically, for odd $L$ the code encodes only two logical qubits, while for even $L$ it encodes three logical qubits \cite{Haah}. For this reason, the LPLOs must be considered separately for the different parities of $L$. Since the bulk properties of the code do not depend on the total size of the lattice, however, the two cases will have the same set of generalised domain walls.  They can be found equally well by considering either case.    

Using our framework, we show that the code admits $\overline{CZ}$ operators between logical qubits associated with the same code, and the only type of logical operator acting between different codes in a stack of such codes is a logical $\text{SWAP}$ operator between corresponding qubits of the codes. 

To show this result, we choose to find the domain walls by considering the case of odd $L$. We then have two anticommuting pairs of logical Pauli operators, which may be realised as follows, where by $\hat{x},\hat{y},\hat{z}$ we denote unit vectors directed in the $x,y,z$ directions~\cite{Haah}. Each of the logical operators is only specified up to a phase.
\begin{itemize}
\item $\bar{X}_1$ is realised by $IX$ operators acting at every site along a plane spanned by $\hat{x}$ and $(\hat{y}+\hat{z})$ vectors. $\bar{Z}_1$ is realised by $ZZ$ operators acting at every site along a line oriented parallel to $\hat{z}$.
\item $\bar{X}_2$ is realised by $YX$ operators acting at every site along a plane spanned by $\hat{z}$ and $(\hat{x}+\hat{y})$ vectors. $\bar{Z}_2$ is realised by $XX$ operators acting at every site along a line oriented parallel to $\hat{x}$.
\end{itemize}
As for the surface code, the $\bar{X}$ operators are realised by magnetic fluxes crossing the lattice to trace out the support of the relevant operator, and similarly $\bar{Z}$ operators are realised by electric charges operators crossing the lattice. This, in fact, implies that the $\bar{Z}$ operators as described must be $\pm i\bar{Z}$ operators, so that they square to $-1$ and give the correct exchange statistics of the electric charge. Nonetheless, for simplicity we will continue to include this phase only when explicitly considering exchange statistics, and omit it elsewhere.

By considering dimensions of excitations, we can deduce that the only possible Clifford domain wall for a single Levin-Wen code is $p: m \to em$. Note that this wall is distinct from the $s$ type wall found in a stacked surface code, since $p$ appends an electric charge to a magnetic flux of the same code. To determine if this wall is indeed allowed, we must consider if it preserves the exchange and braiding relations.

To assist with determining exchange relations throughout our analysis of the code, it is worth noting the following identity,
\begin{equation}
 (AB)^2=A^2 K(A^\dag, B) B^2 \label{ab2}.
 \end{equation}
 We can simplify this result in the special case that $A=\omega P$, where $\omega$ a phase, and $P$ is Hermitian (e.g.~Pauli). In that case we have that $K(A^\dag,B)=K(A,B)$, and so Eq.~\ref{ab2} reduces to the following identity,
\begin{equation}
(AB)^2=A^2K(A,B) B^2.
\end{equation}
In particular, the result implies that for an eigenstate excitation $a$, and general excitation $b$, we have the following identity,
\begin{equation}
T_{ab,ab}=(AB)^2=A^2K(A,B) B^2= T_{a,a} S_{a,b} T_{b,b} \label{eq:exchid}.
\end{equation}
In the special case where $a$ and $b$ are both point-like, this identity can be visualised as in Fig.~\ref{fig:exchid}. 

Using the identity in Eq.~\ref{eq:exchid}, we can verify that domain wall, $p$, does indeed preserve exchange statistics, by the following calculation,
\begin{align}
T_{em,em} &=T_{e,e}  S_{e,m} T_{m,m}\\
&=-1  \cdot -1 \cdot T_{m,m}\\
&=T_{m,m}.
\end{align}
Moreover, braiding relations are preserved, as shown by the following calculations,
\begin{align}
S_{em,e}=K(Y,Z)=-1=K(X,Z)=S_{m,e}\\
S_{em,m}=K(Y,X)=K(X,Y)=S_{m,em}.
\end{align}
Thus, the generalised domain wall $p$ is allowed.

\begin{figure}
\centering
\includegraphics[width=0.5\linewidth]{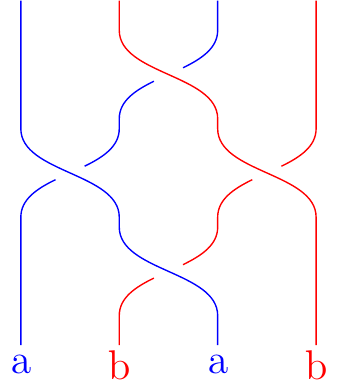}
\caption{Space (horizontal) time (vertical) diagram of the exchange of two composite excitations, $ab$. Focussing on the crossings of world lines of $a$ and $b$ excitations, we can see that it is equivalent to an exchange of $a$ excitations, a braid of $a$ with $b$ and an exchange of $b$ excitations. \label{fig:exchid}}
\end{figure}

The LPLO corresponding to this generalised domain wall must append $\bar{Z}$ operators to $\bar{X}$ operators. Since $\bar{Z}_1$ is perpendicular to $\bar{X}_1$ and $\bar{Z}_2$ is perpendicular to $\bar{X}_2$, it must act as $\overline{CZ}: X_1 \to X_1 Z_2,\, X_2 \to Z_1X_2$. Such an operator can indeed be realised, by a plane of SWAP operators acting on each two-qubit site on the plane. The most straightforward plane to consider is an $xz$ plane, since it then intersects with the support of $\bar{X}_1$ on the $x$ axis, on which $\bar{Z}_2$ is supported, and with the support of $\bar{X}_2$ on the $z$ axis, on which $\bar{Z}_1$ is supported. Such a plane of SWAP operators fixes the logical $Z$ operators, $XX$ and $ZZ$, and maps $IX \to XI=IX \cdot XX$ and $YX \to XY = YX \cdot ZZ$.

Such a plane of SWAP operators oriented at any angle, however, will also implement the $\overline{CZ}$ operator. Of particular importance is the special case where the plane is oriented parallel to one of the $\bar{X}$ operators, say $\bar{X}_1$. Then, it appends a plane of $ZZ$ operators spanned by $\hat{x}$ and $\hat{y}+\hat{z}$. This may be viewed as a product of $L$ $\bar{Z}_2$ operators. Since $L$ is odd, $\bar{Z}^L=\bar{Z}$, and so we have that the operator appends $\bar{Z}_2$ to $\bar{X}_1$, as required. It intersects with $\bar{X}_2$ in the line spanned by $\hat{y}+\hat{z}$ and so appends a line of $ZZ$ operators oriented in the $\hat{y}+\hat{z}$ direction. Since this line traverses the entire lattice in the $z$ direction, this is another realisation of a $\bar{Z}_1$ operator, and so the operator also acts on $\bar{X}_2$ as required.

In summary, when $L$ is odd, we have shown that the code admits a $\overline{CZ}$ logical operator that may be realised by a plane of SWAP operators acting on each two qubit site in a plane of the lattice. 

In addition to this $\overline{CZ}$ logical operator acting within a single code, we may also consider if there are any operators that act between qubits in a stack of Levin-Wen codes. We consider one and two dimensional generalised domain walls in turn.

The first possible type is a one dimensional wall that would act trivially on all electric charges, but could append charges from a different code to fluxes. Such a wall would be analogous to the $s_{ij}$ walls in stacks of surface codes. Such a wall cannot be allowed here, however, since we have that $T_{e_im_j,e_im_j}= T_{e_i,e_i}S_{e_i,m_j}T_{m_j,m_j}=-1 \cdot 1 \cdot T_{m_j,m_j}=-T_{m_j,m_j} \neq T_{m_j,m_j}$, where we use that $S_{e_i,m_j}=1$ for $i\neq j$ and the identity in Eq.~\ref{eq:exchid}. This rules out $\overline{CZ}$ operators being allowed between qubits in different codes in the stack, and verifies that there are no allowed one dimensional generalised domain walls acting between different Levin-Wen codes.

We may now consider a two dimensional wall that appends charges from different codes to charges and fluxes from different codes to fluxes. Such a wall can also not be allowed in this code, however. To see this, assume $e_i \to e_ie_j$ for $i \neq j$. Then, by identity \ref{eq:exchid}, $T_{e_ie_j,e_ie_j}=T_{e_i,e_i}S_{e_i,e_j}T_{e_j,e_j}=T_{e_i,e_i} \cdot 1 \cdot -1=-T_{e_i,e_i}\neq T_{e_i,e_i}$. Thus, a wall that appends single charges to other charges cannot preserve exchange statistics of the code. This also rules out any wall that appends a single flux $m_j$ to $m_i$, since if $e_j$ is fixed by such a wall then we would have that braiding statistics of $m_i$ and $e_j$ are not preserved.

The only allowed type of wall acting between different codes is one that maps a single charge of one type to another single charge: $e_i \leftrightarrow e_j$ and, correspondingly, $m_i \leftrightarrow m_j$. Such a domain wall corresponds to $\overline{\text{SWAP}}_{(i,1),(j,1)}\overline{\text{SWAP}}_{(i,2),(j,2)}$, i.e.~SWAP operators acting between corresponding logical qubits of the two codes. This operator will clearly be realised as a three dimensional operator acting as a $\overline{\text{SWAP}}$ between each qubit in one code and its corresponding qubit in the other code.

So, in summary, we have shown that the only Clifford operators admitted by a Levin-Wen codes with an odd value of $L$ are those generated by a two dimensional $\overline{CZ}$ between pairs of qubits from the same code, and a three dimensional $\overline{\text{SWAP}}$ of corresponding qubits in different Levin-Wen codes.

We now consider possible non-Clifford LPLOs. Since the only Clifford excitation of codimension at least two is the $p$ excitation, we need only consider walls that involve such an excitation. There are two possible candidates for this type of wall. One possibility is that it could act as $m \to p$. This is not allowed, however, as $p$ must braid trivially with $e$ (since it acts as trivially on $e$ as a wall), while $m$ braids non-trivially with $e$.  The other possible wall we could consider is one that acts as $m \to mp$. This is also not allowed, however. To see this, note that $T_{p,p} \propto \overline{CZ}^2 \propto \bar{I}$ where the constant of proportionality is a phase, while $S_{m,p}=K(\overline{CZ},\bar{X}_1)=\bar{Z}_2$. Therefore, by the identity in Eq.~\ref{eq:exchid}, we have the following
\begin{align}
T_{mp,mp} &= T_{m,m}S_{m,p}T_{p,p} \nonumber \\
&\propto T_{m,m} \bar{Z}_2 \nonumber \\
&\neq T_{m,m}
\end{align}
So, the candidate wall does not preserve exchange statistics. Thus, there are no non-Clifford domain walls admitted by the code, and so also no non-Clifford LPLOs.

We now briefly consider the even-$L$ case. Recall that in this case there are three logical qubits. A set of three anticommuting pairs of logical Pauli operators may be realised as follows \cite{Haah}. Again, each logical operator is only specified up to a phase.
\begin{itemize}
\item $\bar{X}_1$ is realised by alternating lines of $YX$ and $ZI$ operators oriented parallel to $\hat{z}$, acting across a plane spanned by $\hat{y}$ and $\hat{z}$. $\bar{Z}_1$ is realised by $XX$ operators acting along a line oriented parallel to $\hat{x}$.
\item $\bar{X}_2$ is realised by alternating lines of $ZY$ and $IX$ operators oriented parallel to $\hat{x}$, acting across a plane spanned by $\hat{z}$ and $\hat{x}$. $\bar{Z}_2$ is realised by $YY$ operators acting along a line oriented parallel to $\hat{y}$.
\item $\bar{X}_3$ is realised by alternating lines of $XZ$ and $IY$ operators oriented parallel to $\hat{y}$, acting across a plane spanned by $\hat{x}$ and $\hat{y}$. $\bar{Z}_1$ is realised by $ZZ$ operators acting along a line oriented parallel to $\hat{z}$.
\end{itemize}

To find the LPLOs for the Levin-Wen fermion model with even $L$, we simply need to consider the two types of generalised domain walls we found in the case of odd-$L$ and reinterpret them as logical operators in light of these new logical Pauli operators. 

It is immediate to see that the $e_i \leftrightarrow e_j$, $m_i \leftrightarrow m_j$ wall that acts between Levin-Wen codes in a stack acts as SWAP operators between corresponding qubits in codes $i$ and $j$; that is as $\overline{\text{SWAP}}_{(i,1),(j,1)} \overline{\text{SWAP}}_{(i,2),(j,2)}\overline{\text{SWAP}}_{(i,3),(j,3)}$. The operator can clearly be realised again as a product of SWAP operators between corresponding qubits in codes $i$ and $j$ across the whole lattice.

The $p: m \to em$ wall now corresponds to three different LPLOs, depending on its orientation. Specifically, $p$ traversing a $yz$ plane appends a $YY$ string oriented in the $y$ direction to $\bar{X}_3$, and thus maps $\bar{X}_3 \to \bar{Z}_2\bar{X}_3$. Similarly, it appends a $ZZ$ string oriented in the $z$ direction to $\bar{X}_2$, and thus maps $\bar{X}_2 \to \bar{X}_2\bar{Z}_3$. Depending on whether it crosses in the $z$ or $y$ direction it also appends $L$ $\bar{Z}_2$ or $\bar{Z}_3$ operators to $\bar{X}_1$. Since $L$ is even, however, $\bar{Z}_2^L=\bar{Z}_3^L=I$, and so it acts trivially on $\bar{X}_1$. Thus, the wall corresponds to the $\overline{CZ}_{23}$ operator. Similarly, $p$ traversing the lattice in a $zx$ plane corresponds to $\overline{CZ}_{31}$ and in an $xy$ plane corresponds to $\overline{CZ}_{12}$ operator. The realisation of these operators is more complicated than for the case of even $L$. Specifically, the $\overline{CZ}_{12}$ is realised by a plane of sitewise $U_{12} = H^{\otimes 2} R_2^{\otimes 2} CZ\, H^{\otimes 2}$SWAP operators. $\overline{CZ}_{23}$ and $\overline{CZ}_{31}$ are realised by appropriate planes of $U_{23}= R_2^{\otimes 2} H^{\otimes 2} CZ H^{\otimes 2}$SWAP and $U_{31}=CZ\,R_2^{\otimes 2}$SWAP operators respectively.

\section{Finding all Locality-Preserving Logical Operators}
\label{V}

In this section, we present our detailed framework for achieving the primary aim of the paper: finding all of the LPLOs admitted by a topological stabiliser code that is locally equivalent to a finite number of copies of a $d$ dimensional surface code. This framework broadly consists of two components:
\begin{enumerate}
\item Identifying the full group of generalised domain walls admitted by the code;
\item Inferring the corresponding group of LPLOs from these generalised domain walls.
\end{enumerate}
Since the generalised domain walls of a code do not depend on its boundary conditions, we fully characterize the first step by considering stacked codes, to provide a classification of the generalised domain walls admitted by any code, in Sec.~\ref{VB}. The second step, however, depends on the choice of boundary conditions for which we do not have a full classification.  As such, for this step we provide only an outline of the approach, illustrated by examples, in Sec.~\ref{VC}.

We first pause to briefly consider the precise class of topological stabiliser codes to which we should expect our analysis to apply. While our focus has been primarily on codes that are locally equivalent to a finite number of toric codes, we observed through the example of the Levin-Wen fermion model in Sec.~\ref{Sec:LW} that there exist codes that lie outside this class but which may also be understood in our framework. Most generally, we should expect our analysis to be valid under the assumption that the eigenstate excitations of the code are of some well-defined, integer dimension that is fixed for all lattice sizes. This is necessary for our dimensional analysis of generalised domain walls to apply. Such an assumption is violated by codes with fractal structures to their excitations, such as those described in Refs.~\cite{HaahCode,BrellCode}. Provided that the excitations maintain their structure as the lattice is grown, or translated, however, the assumption will be satisfied. For this reason, we may conclude that our analysis will apply to stabiliser codes that are both translationally and scale invariant, referred to as \emph{STS codes} by Yoshida~\cite{YoshidaD}.

\subsection{General Constraints on Locality-Preserving Logical Operators}
\label{VA}
Before we develop our procedure, we first consider general properties of LPLOs. To build towards this, recall first the Clifford hierarchy on unitaries, defined recursively as follows,
\begin{align}
\mathcal{C}_1 &= \mathcal{P}_n \text{ (the Pauli group on $n$ qubits)}\\
\mathcal{C}_k &= \{U|UPU^\dag \in \mathcal{C}_{k-1} \forall P \in \mathcal{P}_n\} \text{ for $k>1$}.
\end{align}

That there exists a relationship between LPLOs and the Clifford hierarchy was first observed by Bravyi and K{\"o}nig \cite{BK}. Specifically, Bravyi and K{\"o}nig showed that all LPLOs admissable in a $d$ dimensional topological stabiliser code are in $\mathcal{C}_d$. This relationship is perhaps initially surprising. Within our framework, however, we may develop insight into how the Clifford hierarchy relates to LPLOs in codes, by considering the dimensionality of generalised domain walls and excitations, and using the correspondence between generalised domain walls and LPLOs we have established. This also allows us to derive a stronger version of Bravyi and K{\"o}nig's bound (as well as a stronger bound due to Yoshida and Pastawski \cite{PY}) which holds for a large class of topological stabiliser codes.

To build this insight, we begin by extending the Clifford hierarchy to apply to generalised domain walls and excitations, rather than only operators. Specifically, define a generalised domain wall to be in $\mathcal{C}_k$ if and only if its corresponding LPLO is in $\mathcal{C}_k$. We refer to such a generalised domain wall as a \textit{$\mathcal{C}_k$ domain wall}. Similarly, define a (non-eigenstate) excitation to be in the $k$th level of the hierarchy if and only if its corresponding generalised domain wall is in $\mathcal{C}_k$. Note that an immediate consequence of these definitions is that $\mathcal{C}_k$ domain walls are those that map eigenstate excitations to $\mathcal{C}_{k-1}$ excitations.

We now present our result. Recall from the introduction to Sec.~\ref{V} that this result applies to stabiliser codes that are translationally and scale invariant (STS codes). The proof of this result illuminates how it follows from the dimensional constraints on generalised domain walls discussed in the previous section.
\begin{theorem}
\label{Th1}
An LPLO in $D_p \equiv \mathcal{C}_{p} \setminus \mathcal{C}_{p-1}$, for $p \geq 2$, in an STS code of $d$ spatial dimensions, with minimum eigenstate excitation dimension, $a$, must have support of dimension, $m$, satisfying the following relation,
\begin{equation}
m \geq p(a+1).
\end{equation} 
\end{theorem}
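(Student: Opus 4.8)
The plan is to prove this by induction on $p$, using the correspondence between generalised domain walls and LPLOs together with the dimensional counting rule for how excitations intersect the boundary of an LPLO. The key geometric fact, established in Sec.~\ref{IVA2}, is that a $k$-dimensional generalised domain wall in a $d$-dimensional code acts on a $j$-dimensional excitation only along an intersection of dimension $l = j + k - d - 1$, and acts trivially if this quantity is negative. Combined with the extended-hierarchy bookkeeping — that a $\mathcal{C}_p$ domain wall is exactly one that maps eigenstate excitations to $\mathcal{C}_{p-1}$ non-eigenstate excitations — this lets us track how the dimension of the relevant excitations must increase as we climb the hierarchy.

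First I would set up the base case $p=2$: an LPLO in $\mathcal{C}_2 \setminus \mathcal{C}_1$ is a non-Pauli Clifford, and its domain wall must nontrivially permute the eigenstate excitations. An $m$-dimensional LPLO has an $(m-1)$-dimensional boundary (generalised domain wall). For this wall to act nontrivially on an eigenstate excitation of dimension $a$, we need the intersection dimension $a + (m-1) - d - 1 \geq 0$, i.e.\ $m \geq d + 2 - a$. Here one has to be slightly careful: the bound we want is $m \geq 2(a+1)$, so I would need to argue that for STS codes $d \geq 2a+1$ when a nontrivial permutation exists — essentially because an eigenstate excitation and its ``dual'' partner (the one it braids nontrivially with) have dimensions summing to $d-2$, and the minimum of the two is $a$, forcing $a \leq (d-2)/2$, hence $d + 2 - a \geq 2a + 2 = 2(a+1)$. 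That observation about dual excitation dimensions summing to $d-2$ (the $M + E = d-2$ relation from Sec.~\ref{II}) is what ties the abstract dimension count to the concrete $p(a+1)$ form.

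For the inductive step, suppose the claim holds up to $p-1$. An LPLO $\bar{U} \in D_p$ of support dimension $m$ has an $(m-1)$-dimensional domain wall $W$, which by the extended-hierarchy definition must map some eigenstate excitation (dimension $\geq a$) to a $\mathcal{C}_{p-1}$ non-eigenstate excitation $b$. That non-eigenstate excitation $b$ is itself (the boundary of) a $\mathcal{C}_{p-1}$ domain wall, whose corresponding LPLO lives in $D_{p-1}$ and hence, by the inductive hypothesis, has support dimension $\geq (p-1)(a+1)$; correspondingly $b$ has dimension $\geq (p-1)(a+1) - 1$. Now apply the intersection rule to $W$ acting on the eigenstate excitation of dimension $a$: the image has dimension $a + (m-1) - d - 1$, and this must be at least the dimension of $b$, giving $a + (m-1) - d - 1 \geq (p-1)(a+1) - 1$, i.e.\ $m \geq (p-1)(a+1) + d + 1 - a = (p-1)(a+1) + (d - a) + 1$. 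Using $d - a \geq a + 1$ (again from $a \leq (d-2)/2$, which holds whenever nontrivial walls exist) yields $m \geq (p-1)(a+1) + (a+1) + 1 > p(a+1)$, which is even stronger than needed, so $m \geq p(a+1)$ follows.

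The main obstacle I anticipate is making the non-eigenstate-excitation side of the argument rigorous: one must justify that a $\mathcal{C}_{p-1}$ non-eigenstate excitation really does ``carry'' a lower-dimensional domain wall to which the inductive hypothesis applies, and that its dimension is controlled by the support dimension of the associated LPLO in the expected way (support dimension minus one). This requires care about what ``dimension of a non-eigenstate excitation'' means and about the fact that the wall $W$ might map the eigenstate excitation to a composite of several non-eigenstate pieces, in which case one wants the maximal-dimension piece to be the one that sits in $D_{p-1}$. I would handle this by appealing to the structural picture from Sec.~\ref{IVA2}–\ref{IVB} that restriction of an LPLO produces a boundary excitation whose intrinsic dimension is one less than the operator's support, and that the $\mathcal{C}$-level of a composite is the maximum of the levels of its factors, so the counting is driven by the highest-level component. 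The rest is the elementary arithmetic above plus the $M+E = d-2$ identity to convert the raw codimension bound into the stated $p(a+1)$ form.
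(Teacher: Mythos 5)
Your overall strategy is the same as the paper's: induct on $p$, bound the dimension $k_p$ of a $D_p$ generalised domain wall using the intersection formula $l=j+k-d-1$, and feed the dimension of $\mathcal{C}_{p-1}$ non-eigenstate excitations (one less than the support dimension of the corresponding $D_{p-1}$ LPLO) back into the count. However, there is a genuine error in how you apply the intersection formula, in both the base case and the inductive step: you set $j=a$, i.e.\ you require the wall to act nontrivially on a \emph{minimum}-dimensional eigenstate excitation. A nontrivial wall need not touch the minimum-dimensional excitations at all --- it only needs to act nontrivially on \emph{some} eigenstate excitation, whose dimension can be as large as $d-2-a$ (this is where the $M+E=d-2$ duality actually enters). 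The correct counting therefore takes $j\leq d-2-a$ and bounds the image dimension from below: $l\geq a$ in the base case (the appended object must itself be an excitation, so merely $l\geq 0$ is not the right nontriviality condition), and $l\geq k_{p-1}\geq(p-1)(a+1)-1$ in the inductive step. This gives $k_p\geq d-1-(d-2-a)+l$, which telescopes exactly to $p(a+1)-1$. Your version instead yields $m\geq d+2-a$ for $p=2$ and $m\geq(p-1)(a+1)+d+1-a$ thereafter, and these bounds are simply false: in the three-dimensional stacked code ($d=3$, $a=0$) the $\overline{CZ}$ gate is a $\mathcal{C}_2\setminus\mathcal{C}_1$ LPLO with $m=2$, violating your claimed $m\geq 5$, and $\overline{CCZ}$ has $m=3$, violating your claimed $m\geq 6$. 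The theorem's bound $p(a+1)$ is saturated by exactly these examples, so the fact that your argument delivers something ``strictly stronger than needed'' is the signal that a premise is wrong, not a bonus.

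A secondary problem is the arithmetic patch in the base case: $a\leq(d-2)/2$ gives $d\geq 2a+2$, which implies $d+2-a\geq 2(a+1)$ only when $d\geq 3a$; this fails for, say, $a=3$, $d=8$. So even granting your (incorrect) intermediate bound, the conversion to the stated form does not go through. The fix is not to massage $d+2-a$ but to replace $j=a$ with $j\leq d-2-a$ and $l\geq 0$ with $l\geq a$ from the outset, after which no inequality between $d$ and $a$ beyond the duality relation is needed and the $d$-dependence cancels identically. Your instinct to worry about composite images in the inductive step (taking the maximal-level, hence maximal-dimension, component of the image) is reasonable and consistent with how the paper treats $D_{p-1}$ excitations, but it does not rescue the dimensional count above.
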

Before proving this theorem, note that, since a $d$ dimensional code cannot admit an LPLO with support of dimension greater than $d$, the following corollary immediately follows.
\begin{cor}
\label{cor1}
A $d$-dimensional STS code can support an LPLO from $\mathcal{C}_{p} \setminus \mathcal{C}_{p-1}$ only if
\begin{equation}
p \leq \frac{d}{a+1}.
\end{equation}
\end{cor}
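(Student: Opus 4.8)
The plan is to prove Corollary \ref{cor1} as an immediate deduction from Theorem \ref{Th1}, which I am entitled to assume in full. Since the corollary is precisely the contrapositive combined with the geometric bound on the dimension of any operator's support, the work is to make explicit the one inequality that Theorem \ref{Th1} is being fed into.

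First I would fix the setup: suppose, for the sake of establishing the bound, that the STS code of spatial dimension $d$ (with minimum eigenstate excitation dimension $a$) does admit an LPLO lying in $\mathcal{C}_p \setminus \mathcal{C}_{p-1}$ with $p \geq 2$. By Theorem \ref{Th1}, the support of any such operator has dimension $m$ with $m \geq p(a+1)$. The second ingredient is purely geometric and already stated in the paper: a $d$-dimensional code lives on a $d$-dimensional lattice, so no operator acting on it can have support of dimension exceeding $d$, i.e., $m \leq d$. I would state these two facts in one line each, since both are justified upstream.

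Chaining the two inequalities gives $p(a+1) \leq m \leq d$, and dividing through by the positive quantity $a+1$ yields $p \leq d/(a+1)$, which is exactly the claimed bound. I would present this as the whole content of the proof, because the corollary adds nothing beyond combining the theorem with the trivial geometric ceiling on support dimension. The only edge case worth a remark is the restriction $p \geq 2$ inherited from Theorem \ref{Th1}: for $p = 1$ the statement is vacuous or trivially true (every code supports its Pauli group, $\mathcal{C}_1$), so no separate argument is needed, and I would note that the corollary is intended for the nontrivial regime $p \geq 2$ where the theorem has force.

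I do not anticipate a genuine obstacle here, since the corollary is a one-step consequence; the only thing requiring care is to state both input inequalities in the correct direction so the division by $a+1$ preserves the sense of the bound, and to flag that $a+1 > 0$ (guaranteed since $a \geq 0$ for any code) so the division is legitimate. If pressed for completeness I would also observe that the bound is tight in spirit — the stacked-code examples of Sections \ref{IVB2} and \ref{IVB} realise operators saturating it (e.g.\ the $\overline{CCZ}$ at $p=3$, $a=0$, $d=3$) — but that remark is illustrative rather than part of the deduction.
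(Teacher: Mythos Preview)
Your proposal is correct and follows exactly the paper's own reasoning: the corollary is stated immediately after Theorem~\ref{Th1} with the single remark that a $d$-dimensional code cannot admit an LPLO of support dimension greater than $d$, so $p(a+1)\leq m\leq d$ and the bound follows. Your additional remarks on the $p\geq 2$ restriction and on tightness are fine but go beyond what the paper records.
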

This immediately implies the Bravyi-K{\"o}nig bound for our codes, since $a \geq 0$. Also, note the distance of a topological code scales as the minimum dimension of a logical operator.  This minimum dimension is greater (by one) than the minimum dimension of an eigenstate excitation, i.e.,~it is minimum dimension $a+1$. Thus, corollary 1 also implies that the distance, $\delta$, of a $d$-dimensional STS code which admits an LPLO in $\mathcal{C}_p\setminus \mathcal{C}_{p-1}$ must satisfy the following relation,
\begin{equation}
\delta \leq O\left(\frac{d}{p}\right).
\end{equation}
This also implies the distance tradeoff theorem of Pastawski and Yoshida \cite{PY} for these codes.
We now prove theorem \ref{Th1}.
\begin{proof}[Proof of Theorem \ref{Th1}]
We show by induction on $p$ that a $D_p$ domain wall must be of dimension $k_p$ satisfying the following relation,
\begin{equation} \label{Toprove}
 k_p \geq p(a+1)-1.
\end{equation}
The result then follows since a $D_p$ LPLO must correspond to a $D_p$ domain wall of one less dimension than itself. 

Recall first, from Sec.~\ref{IVA2}, that the dimension, $k$ of a generalised domain wall that transforms an $l$ dimensional region of a $j$ dimensional excitation in a $d$ dimensional code is given by the following,
\begin{equation}
k=d-1-j+l \label{DWeq}.
\end{equation}
For $p=2$, we require that the LPLO corresponds to a Clifford domain wall. Thus, the wall must permute eigenstate excitations. Note that if the minimum eigenstate excitation dimension is $a$, the maximum dimension of an eigenstate excitation must be $d-2-a$. Thus, a Clifford domain wall must transform a region of dimension at least $a$ within an excitation of dimension at most $d-2-a$. Thus, Eq.~\ref{DWeq} implies that the domain wall must be of dimension $k_1$ satisfying the following relation,
\begin{equation}
k_2\geq d-1-(d-2-a)+a= 2(a+1)-1.
\end{equation}
Thus, the result holds for $p=1$.

Assume now that a $D_{p-1}$ domain wall is of dimension $k_{p-1}$ satisfying the following condition,
\begin{equation}
k_{p-1} \geq (p-1)(a+1)-1.
\end{equation}
This implies that the dimensions of $D_{p-1}$ excitations must satisfy the same constraint. A $D_p$ domain wall must map an eigenstate excitation to some $D_{p-1}$ excitation. Thus, it must transform a region of dimension at least $k_{p-1}$ within an excitation of dimension at most $d-2-a$. Thus, using Eq.~\ref{DWeq} and simplifying, the dimension, $k_p$, of a $D_p$ domain wall must satisfy Eq.~\ref{Toprove} as required.
\end{proof}
With this constraint on LPLOs derived, we now proceed to consider how the operators themselves may be identified for particular codes.

\subsection{Full Classification of Generalised Domain Walls}
\label{VB}

\subsubsection{Outline of Approach}
The procedure we follow to find the generalised domain walls admitted by a code locally equivalent to a finite number of toric codes draws together a number of the ideas we have considered so far. Firstly, we note that we need only consider stacked codes, as the generalised domain walls are independent of the choice of boundary conditions. Secondly, we use the correspondences we have identified between generalised domain walls, LPLOs and excitations. Specifically, for stacked codes we have identified that there is a bijective correspondence between $k$ dimensional generalised domain walls in the $n$th level of the Clifford hierarchy and (non-eigenstate) excitations of the same dimension and same level of the hierarchy. Again for stacked codes, we also have a further bijective correspondence of $k+1$ dimensional locality preserving logical gates in the $n$th level of the Clifford hierarchy with $k$ dimensional generalised domain walls in the same level of the hierarchy. We use both of these correspondences in our procedure.

The procedure is as follows.
\begin{enumerate}
\item Set $n=1$.
\item Identify $\mathcal{C}_{n+1}$ domain walls. This is done by considering, in turn, generalised domain walls of each dimension from 1 up to $d-1$. For each dimension, we find all mappings from eigenstate excitations to $\mathcal{C}_{n}$ excitations which preserve braiding and exchange relations, and satisfy the dimensional constraint on generalised domain walls found in Sec.~\ref{IVA2}. If the set of $\mathcal{C}_{n+1}$ domain walls is the same as the set of $\mathcal{C}_n$ domain walls then all generalised domain walls have already been found, so terminate the process. 
\item Relate this set of $\mathcal{C}_{n+1}$ domain walls to a corresponding set of $\mathcal{C}_{n+1}$ excitations by using the dimension-preserving, bijective correspondence between them.
\item Relate the set of $\mathcal{C}_{n+1}$ domain walls to the group of $\mathcal{C}_{n+1}$ LPLOs by using the bijective correspondence between them.
\item Use the correspondence, induced by the correspondences from the previous two steps, between $\mathcal{C}_{n+1}$ excitations and $\mathcal{C}_{n+1}$ LPLOs to determine the braiding and exchange relations of $\mathcal{C}_{n+1}$ excitations by using the group commutators and squares of the $\mathcal{C}_{n+1}$ logical operators.
\item Set $n\to n+1$ and return to step 2.
\end{enumerate}
We note that corollary \ref{cor1} ensures that this procedure will indeed terminate at some finite value of $n$.  We also note that, as a result of step 4, the procedure also gives the corresponding logic gates as LPLOs admitted by a stacked code.

We now apply this method to a general stacked code to fully classify the allowed generalised domain walls in any code locally equivalent to a finite number of toric codes, using induction to see the consequences of applying it on stacks of arbitarily many codes of arbitrarily high dimension.

\subsubsection{Classification for Identical Toric Codes}
We illustrate the classification of generalised domain walls first for the special case where the toric codes are identical (i.e.~have eigenstate excitations of the same dimension as one another). We explain how this may easily be generalised to the case of non-identical codes in Sec.~\ref{VB3}. The classification is as follows.

\begin{theorem}\label{Th2}
The generalised domain walls admitted by a code locally equivalent to $n$ identical copies of a $d$ dimensional toric code, with magnetic fluxes of dimension $M \geq \frac{d}{2}-1$ are products of:
\begin{itemize}
\item $h_i: e_i \leftrightarrow m_i$, of dimension $d-1$, iff $M=\frac{d}{2}-1$
\item $c_{ij}: m_i \to m_im_j, \, e_j \to e_i e_j$, of dimension $d-1$, iff $n \geq 2$
\item $s^{(k)}_{i_1, \ldots i_k}: m_{i_a} \to m_{i_a}s^{(k-1)}_{i_{a+1}, \ldots i_{k-1},i_1,\ldots i_{a-1}}$, of dimension $k(d-M-1)-1$, for all $k$ such that $k \leq \min\left(n,\frac{d}{d-M-1}\right)$.
\end{itemize}
\end{theorem}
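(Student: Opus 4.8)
The plan is to run the inductive procedure outlined in Sec.~\ref{VB}, using the correspondence between generalised domain walls, LPLOs, and non-eigenstate excitations established for stacked codes. Since the codes are identical, all eigenstate electric charges $e_i$ have dimension $E = d-2-M$ and all magnetic fluxes $m_i$ have dimension $M$, with $M \geq \tfrac{d}{2}-1$ (equivalently $E \leq M$). We first settle the Clifford ($\mathcal{C}_2$) domain walls: a Clifford wall must permute eigenstate excitations while preserving dimension, so it cannot mix $e_i$'s with $m_j$'s unless $E = M$, i.e.~$M = \tfrac{d}{2}-1$; this is where $h_i$ appears, exactly as in the two-dimensional surface code argument. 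The walls $c_{ij}$ arise (for $n \geq 2$) as the dimension-preserving automorphisms that map charges to charges and fluxes to fluxes; by the argument used in Sec.~\ref{IVB2} these correspond to $\overline{CNOT}_{ij}$, and together with Pauli operators and the $h_i$ (when present) they exhaust the Clifford walls, since any LPLO must lie in the real Clifford group on the $e$/$m$ generating set. The base case $k=2$ of the $s$-family (the $s^{(2)}_{ij}$ walls of dimension $2(d-M-1)-1$) then comes from walls that append a charge to a flux, which are dimensionally allowed precisely when $2(d-M-1)-1 \geq 1$, i.e.~$k=2 \leq \tfrac{d}{d-M-1}$; as in Sec.~\ref{IVB2} these correspond to $\overline{CZ}_{ij}$.

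For the inductive step, I would assume the generalised domain walls up through level $\mathcal{C}_n$ are generated by $h_i$, $c_{ij}$, and $s^{(k)}$ for $k \leq n$ with the stated dimensions and corresponding LPLOs $\overline{CNOT}$, $\overline{CZ}$, and the $k$-controlled-$Z$ gate $\overline{C^{(k-1)}Z}$, whose group commutators and squares reproduce the $S$ and $T$ data of the $s^{(k)}$ excitations. To find $\mathcal{C}_{n+1}$ walls, one considers walls that map an eigenstate flux $m_i$ to a product of $m_l$'s and $s^{(n)}$-type excitations (walls touching only charges, or mapping to lower-level excitations, having already been classified). The dimension formula $k = d-1-j+l$ with $j = M$ forces the wall to append an $n$-fold composite along a region of dimension $l = (n)(d-M-1)-1$, which requires $k_{n+1} = (n+1)(d-M-1)-1 \leq d-1$, i.e.~$(n+1) \leq \tfrac{d}{d-M-1}$ — exactly the stated bound. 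Within this dimensional window, preservation of braiding and exchange relations is checked by the same style of computation as in Sec.~\ref{IVB2}: comparing $S_{m_i,e_i} = -1$ with $S$ of the candidate image rules out appending to the ``wrong'' flux and forces the image to be $m_i \cdot (\text{cyclic } s^{(n)})$; comparing the $T$ values (using identity~\eqref{eq:exchid}) forces the index sets to be disjoint; and a direct group-commutator computation, cyclically symmetric in the $k$ indices, verifies that the fully cyclic wall $s^{(k)}_{i_1,\dots,i_k}$ is consistent. This identifies the $\mathcal{C}_{n+1}$ walls with $\overline{C^{(n)}Z}$ gates and closes the induction; Corollary~\ref{cor1} guarantees termination once $k$ exceeds $\min(n, \tfrac{d}{d-M-1})$.

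I would organize the write-up around three lemmas: (i) the Clifford walls are generated by $h_i$ (iff $M=\tfrac{d}{2}-1$) and $c_{ij}$ (iff $n\geq 2$); (ii) the dimensional constraint forces any candidate $\mathcal{C}_{n+1}$ wall acting nontrivially on fluxes to have the cyclic form $m_{i_a}\to m_{i_a}s^{(n)}_{\dots}$ with disjoint indices, so $k\leq \tfrac{d}{d-M-1}$; and (iii) this form does preserve all braiding and exchange relations, so it is realized. The bookkeeping in (iii) — tracking which $S$ and $T$ comparisons must be made among $\{e_i\}$, $\{m_i\}$, and the $s^{(k)}$ excitations, and showing the remaining checks reduce to the identity $(AB)^2 = A^2 K(A,B) B^2$ together with the $\overline{C^{(k)}Z}$ commutation relations — is routine but lengthy, and I would present it schematically rather than exhaustively.

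The main obstacle I anticipate is (iii): confirming that no additional non-cyclic or ``mixed'' walls sneak in at higher levels. The dimensional argument cleanly bounds $k$, and the $S_{m_i,e_i}$ comparison pins down most of the structure, but ruling out walls that simultaneously append composites of different $s^{(j)}$-levels, or that act on several fluxes in a non-cyclically-symmetric way, requires checking that such candidates violate either a braiding relation with some $e_l$ or an exchange relation via~\eqref{eq:exchid}. I expect this to work because the logical-operator side is tightly constrained — the group generated by $\overline{CNOT}$, $\overline{CZ}$, and the $\overline{C^{(k)}Z}$ gates, modulo Pauli and the relations forced by STS structure, is exactly the ``real diagonal-hierarchy'' group — but making the induction airtight (rather than merely exhibiting generators) is the delicate part, and I would lean on the bijection between walls and LPLOs to import the group-theoretic rigidity of these gate sets.
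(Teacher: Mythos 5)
Your proposal is correct and follows essentially the same route as the paper's own proof: an induction on the Clifford-hierarchy level using the wall/LPLO/excitation correspondences of Sec.~\ref{VB}, the dimensional constraint $k(d-M-1)-1\le d-1$ to obtain $k\le \frac{d}{d-M-1}$, and braiding/exchange checks via group commutators, squares, and the identity $(AB)^2=A^2K(A,B)B^2$ to pin the $\mathcal{C}_{k}$ walls down to the cyclic $s^{(k)}$ form modulo lower-level walls; your three lemmas regroup, but do not substantively differ from, the paper's Lemmas \ref{l1}--\ref{l3}. The only blemishes are minor: your lemma (i) as stated omits $s^{(2)}_{ij}$ from the Clifford walls (though you include it immediately after), and the base-case condition for $s^{(2)}$ should come from $2(d-M-1)-1\le d-1$ rather than from requiring the wall dimension to be at least one.
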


As discussed, this classification also immediately gives us a classification of all logical operators that may be realised as LPLOs for a stack of identical toric codes. Note that in this classfication we refer to an LPLO with support on a $k$-dimensional subspace as being of dimension $k$. We also denote by $C^{k-1}Z$ a $Z$ operator controlled by $k-1$ qubits. The classification can then be summarised as follows.
\begin{cor}
The LPLOs admitted by a $d$ dimensional stacked code consisting of $n$ identical toric codes with magnetic fluxes of dimension $M\geq \frac{d}{2}-1$ are products of Pauli operators and the following.
\begin{itemize}
\item $H$, of dimension $d$, iff $M=\frac{d}{2}-1$
\item $CNOT$, of dimension $d$, iff $n \geq 2$
\item $C^{k-1}Z$, of dimension $k(d-M-1)$, for all $k$ such that $k \leq \min\left(n,\frac{d}{d-M-1}\right)$
\end{itemize}
\end{cor}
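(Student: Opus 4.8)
\textbf{Proof proposal for Theorem \ref{Th2}.}

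The plan is to run the inductive procedure of the preceding outline on the stacked code of $n$ identical $d$-dimensional toric codes, tracking at each level $k$ both the allowed generalised domain walls and the braiding/exchange relations of the new $\mathcal{C}_k$ excitations they produce. First I would fix the dimensional bookkeeping: with magnetic fluxes of dimension $M$ and electric charges of dimension $E=d-2-M$, and the hypothesis $M\geq d/2-1$ forcing $E\leq M$, so the minimum eigenstate excitation dimension is $a=E=d-M-1$. Formula \eqref{DWeq}, $k=d-1-j+l$, then immediately gives the dimensions quoted in the statement: a $\mathcal{C}_2$ (Clifford) wall permuting eigenstate excitations has dimension $d-1$ (taking $j=l$), so $h_i$ and $c_{ij}$ are codimension-one walls, while the $s^{(k)}$ walls, obtained by appending a $\mathcal{C}_{k-1}$ excitation of dimension $k_{k-1}$ to a flux of dimension $M$, have dimension $d-1-M+k_{k-1}$; solving the recursion $k_1 = d-1-M$ (the $s^{(2)}$ wall appends an electric charge of dimension $E=d-M-1$ to a flux) gives $k_k = k(d-M-1)-1$ as in Theorem \ref{Th1}. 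The constraint $k\leq d/(d-M-1)$ is exactly the requirement $k_k \leq d-1$ (a wall cannot exceed codimension zero), and $k\leq n$ is forced because $s^{(k)}$ needs $k$ distinct code labels.

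The substance is showing these walls are \emph{all} of them. For the Clifford level I would argue as in Section \ref{sec:2DStack}: a Clifford wall permutes $\{e_i,m_i\}$ preserving fusion, so it is a symplectic transformation on the $2n$ logical Paulis; preservation of the exchange statistics $T_{e_i,e_i}, T_{m_i,m_i}$ (which are trivial, since electric charges and magnetic fluxes are bosons in the toric code) rules out mixing real and non-real Paulis when $M\neq d/2-1$ (so no $e_i\leftrightarrow m_i$), and the braiding relations $S_{e_i,m_j}=(-1)^{\delta_{ij}}$ pin the rest down to the group generated by $c_{ij}$ (and $h_i$ when $M=d/2-1$), exactly the argument already given for CNOT and Hadamard in the three-dimensional stacked-code example. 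For the inductive step I would mimic the three-dimensional $s^{(3)}$ analysis verbatim but with general labels: given the $\mathcal{C}_k$ walls and the induced logical gate $\overline{C^{k-1}Z}$ together with its commutators and squares, a candidate $\mathcal{C}_{k+1}$ wall appends a $\mathcal{C}_k$ excitation to a flux; up to composition with Clifford walls (using the swap walls $w_{ij}=c_{ij}c_{ji}c_{ij}$ to reduce to walls fixing all charges) one checks via $S_{m_i,e_i}=-1$ that the appended $\mathcal{C}_k$ excitation must involve only labels distinct from $i$, via the exchange identity \eqref{eq:exchid} / \eqref{ab2} that $T_{m_is^{(k)},m_is^{(k)}}=T_{m_i,m_i}$ forces those labels also distinct from each other in the required cyclic pattern, and finally that the fully symmetric form $s^{(k+1)}_{i_1\ldots i_{k+1}}$ does preserve all $S$ and $T$ relations — the same telescoping computation as the displayed $S_{m_i s^{(2)}_{jk}, m_j s^{(2)}_{ki}}=1$ identity, now with $\overline{C^{k-1}Z}$ in place of $\overline{CZ}$.

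I expect the main obstacle to be the inductive braiding/exchange verification in full generality: establishing that $\overline{C^{k-1}Z}$ gates have exactly the group-commutator structure needed (e.g.\ $K(\overline{C^{k-1}Z_{j_1\ldots j_k}}, \bar X_i) = \overline{C^{k-2}Z}$ on the appropriate $k-1$ qubits when $i$ is among the controls, and trivial otherwise) and that no \emph{asymmetric} append pattern survives — ruling out, for instance, a wall appending $s^{(k)}$ with a label repeated or with the wrong cyclic orientation requires carefully enumerating the orbits of the already-found Clifford walls on candidate appended excitations and checking each against a well-chosen braiding relation. A secondary subtlety is confirming termination and that the process closes exactly at $k=\lfloor d/(d-M-1)\rfloor$ rather than earlier; this follows from Corollary \ref{cor1} and the observation that at each stage a strictly larger wall group is produced until the dimensional ceiling is hit, but it should be stated explicitly. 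The case $M=d/2-1$ needs one extra remark: then $E=M$, both excitation types have the minimal dimension, the $h_i$ wall becomes dimensionally admissible, and one checks it commutes appropriately with the $c_{ij}$ and $s^{(k)}$ generators so that the full group is the stated product.
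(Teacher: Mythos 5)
Your proposal follows essentially the same route as the paper: the corollary is reduced to the wall classification of Theorem~\ref{Th2}, which you prove by pinning down the Clifford walls via the dimensional constraint of Eq.~\ref{DWeq} together with braiding/exchange preservation, and then inducting up the Clifford hierarchy using the $s^{(k)}\leftrightarrow C^{k-1}Z$ correspondence, the swap walls $w_{ij}$, and the commutator/square identities --- exactly the structure of Lemmas~\ref{l1}--\ref{l3} and their assembly. Two small slips that do not affect the argument: the electric charge has dimension $E=d-2-M$ (so the base of your dimension recursion should be $d-M-2$, not $d-M-1$, though your final formula $k(d-M-1)-1$ is correct), and the exclusion of $h_i$ when $M\neq \frac{d}{2}-1$ is a dimensional obstruction rather than a consequence of exchange statistics, as you correctly note only in your closing paragraph.
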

We note that this result is consistent with the known LPLOs admitted by a colour code with $M=d-2$ on a cubic lattice, which is equivalent to a stacked code where $n=d$ \cite{Kubica}. Note also that the assumption that $M \geq \frac{d}{2}$ is not of great significance, since in a case where $M < \frac{d}{2}$ we can simply substitute $E$ for $M$ in the dimensions of LPLOs and walls, and conjugate each LPLO by Hadamards on every code. We defer including this explicitly to our more general classification in Sec.~\ref{VB3}.

To work towards proving the theorem, we first prove three lemmas. The first concerns the special case where $M=E=\frac{d}{2}-1$. We show that the generalised domain walls allowed here correspond to those we found for two dimensional codes.

\begin{lem}\label{l1}
The generalised domain walls admitted by a code locally equivalent to $n \geq 2$ identical copies of a $d$ dimensional toric code, with magnetic fluxes and electric charges of the same dimension, $M=E=\frac{d}{2}-1$ are products of the following $d-1$ dimensional domain walls,
\begin{align}
h_i: & \, e_i \leftrightarrow m_i\\
s_{ij}^{(2)}: & \, m_i \to m_ie_j,\, m_j \to e_im_j
\end{align}
\end{lem}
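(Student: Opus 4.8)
The plan is to leverage the structural machinery already established: a generalised domain wall corresponds bijectively (for stacked codes) to an automorphism of eigenstate excitations that (a) preserves braiding and exchange statistics, encoded via group commutators $K(\cdot,\cdot)$ and squares of the associated Pauli logical operators, and (b) satisfies the dimensional constraint $k = d-1-j+l$ of Sec.~\ref{IVA2}, where a $k$-dimensional wall acts on an $l$-dimensional region of a $j$-dimensional excitation. First I would observe that in the case $M = E = \frac{d}{2}-1$, the electric charges $e_i$ and magnetic fluxes $m_i$ all have the \emph{same} dimension, $M$. Consequently the dimensional constraint imposes no asymmetry between $e$- and $m$-type excitations: any automorphism permuting eigenstate excitations among themselves is realised by a wall of dimension $k = d-1-M+M = d-1$ (a true codimension-one domain wall), and since the composite excitations $e_i m_j$, $m_i m_j$, etc.\ also have dimension $M$, there is no dimensional obstruction to any of the candidate walls. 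This is the key simplification: the problem reduces entirely to the \emph{statistical} (commutator/square) constraints, which are formally identical to those analysed for the two-dimensional stacked code in Sec.~\ref{sec:2DStack}.

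Next I would make that reduction precise. Under the identification $e_i \leftrightarrow \bar Z_i$, $m_i \leftrightarrow \bar X_i$, the squares $T_{a,a}$ and commutators $S_{a,b}$ of eigenstate excitations are exactly $\bar A^2$ and $K(\bar A, \bar B)$ for the corresponding Pauli logical operators, and these are computed from the $S$ and $T$ matrices in Eqs.~\eqref{eq:S}--\eqref{eq:T}, which are $n$-fold tensor powers of the single–surface-code matrices. An automorphism of the excitation group (with fusion as group law) preserving fusion is determined by its action on the generators $\{e_i, m_i\}$; requiring in addition that it commute with $S$ and $T$ is precisely the condition that the corresponding Clifford operator's conjugation action commute with taking squares and group commutators of Paulis. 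But the group of Clifford operators whose conjugation fixes every $\bar A^2$ and every $K(\bar A,\bar B)$ is generated by Hadamards $\bar H_i$ and controlled-$Z$ gates $\overline{CZ}_{ij}$ --- this is exactly the content established for the two-dimensional stacked code, where the real Clifford group was shown to be the full LPLO group. Translating back through the dictionary, the generating automorphisms are $h_i: e_i \leftrightarrow m_i$ and $s^{(2)}_{ij}: m_i \to m_i e_j,\ m_j \to e_i m_j$, each realised as a $d-1$ dimensional wall by the dimension count above.

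Finally I would verify the two directions of the claimed equality. For ``$\supseteq$'': each $h_i$ and $s^{(2)}_{ij}$ does preserve all braiding and exchange relations (this is inherited verbatim from the two-dimensional verification, since the $S,T$ data are unchanged) and is dimensionally admissible as a $d-1$ dimensional wall, so each is an allowed generalised domain wall. For ``$\subseteq$'': any generalised domain wall of this code, restricted to its action on eigenstate excitations, is an automorphism of the excitation group commuting with $S$ and $T$; by the previous paragraph it lies in the group generated by the $h_i$ and $s^{(2)}_{ij}$. One must also check that no \emph{non-Clifford} walls arise, i.e.\ that no eigenstate excitation can be mapped to a non-eigenstate excitation: since the non-eigenstate excitation of lowest dimension associated with these walls is the one dual to $s^{(2)}_{ij}$, which has dimension $(d-1) - 1 = d-2 > M = \frac{d}{2}-1$ whenever $d > 2$ (and for $d=2$ the two-dimensional result applies directly), such an excitation is strictly higher-dimensional than any eigenstate excitation and hence cannot be in the image of an eigenstate excitation under a dimension-nonincreasing wall. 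This closes the induction implicit in the procedure of Sec.~\ref{VB} at $n=2$, giving the stated generating set.

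The step I expect to be the main obstacle is the clean justification that ``an automorphism of the excitation group commuting with $S$ and $T$ is generated by the $h_i$ and $s^{(2)}_{ij}$'' --- i.e.\ importing the real-Clifford-group classification of Sec.~\ref{sec:2DStack} without circularity and making sure the dimensional bookkeeping (that all relevant composite and non-eigenstate excitations have the dimensions claimed, so that no wall is spuriously excluded \emph{or} spuriously admitted) is airtight for every $d$ with $M=E=\frac d2 - 1$. Once that bookkeeping is in place, the statistical part is a direct transcription of the two-dimensional computation.
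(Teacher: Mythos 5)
Your proposal is correct and follows essentially the same route as the paper's proof: the observation that equal excitation dimensions render the dimensional constraint vacuous (so every admissible permutation is a $d-1$ dimensional wall and the classification reduces to the two-dimensional stacked-code computation generated by $h_i$ and $s^{(2)}_{ij}$), followed by the observation that the resulting Clifford excitations exceed the eigenstate excitations in dimension, ruling out $\mathcal{C}_3\setminus\mathcal{C}_2$ walls. The only quibble is your dimension count for the non-eigenstate excitation associated with $s^{(2)}_{ij}$: in the paper's convention the excitation has the same dimension as the wall, namely $d-1$ (not $(d-1)-1$), which makes the final step uniform in $d$ and removes the need for your separate $d=2$ case --- but your weaker bound still suffices, so this does not affect the conclusion.
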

\begin{proof}
Since all eigenstate excitations are equal in dimension then, by Eq.~\ref{DWeq} any allowed permutation of these excitations corresponds to a domain wall of dimension $d-1$. This implies that dimensional constraints do not play any role in restricting the allowed domain walls in this case, and so the group of allowed Clifford domain walls in any code of this type will be the same as in the two dimensional stacked code. We have already seen that this is the group generated by $h_i$ and $s_{ij}$, which correspond to Hadamard and $\overline{CZ}$ LPLOs respectively in the stacked code. Since all Clifford domain walls, and hence all Clifford excitations, are $d-1$ dimensional, they are all of higher dimension than the eigenstate excitations. Thus, no $\mathcal{C}_3 \setminus \mathcal{C}_2$ domain walls are possible. Hence, the full group of domain walls for a code where $M=E$ is generated by $h_i$ and $s_{ij}^{(2)}$.
\end{proof}

We now begin to consider the more complicated case, where $M>E$, starting with Clifford and $\mathcal{C}_3$ domain walls. We show that these walls correspond to those we found for three dimensional codes.
\begin{lem}\label{l2}
The $\mathcal{C}_3$ domain walls (including Clifford domain walls) admitted by a code locally equivalent to $n \geq 3$ identical copies of a $d$ dimensional toric code, with magnetic fluxes of dimension $M>\frac{d}{2}-1$ are products of the following generalised domain walls:
\begin{align}
c_{ij}: & \, m_i \to m_im_j, \, e_j \to e_i e_j\\
s_{ij}^{(2)}: & \, m_i \to m_ie_j,\, m_j \to e_im_j\\
s_{ijk}^{(3)}: & \, m_i \to m_i s_{jk}^{(2)}, \, m_j \to m_j s_{ki}^{(2)}, \, m_k \to m_k s_{ij}^{(2)}
\end{align}
with $c_{ij}$ of dimension $d-1$, $s_{ij}^{(2)}$ of dimension $2(d-M-1)-1$ and $s_{ijk}^{(3)}$ of dimension $3(d-M-1)-1$.
\end{lem}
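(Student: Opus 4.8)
The plan is to instantiate the general procedure of Sec.~\ref{VB} with $n=1$ and then $n=2$: first classify the Clifford ($\mathcal{C}_2$) domain walls, then the $\mathcal{C}_3\setminus\mathcal{C}_2$ walls, at each stage intersecting the braiding/exchange constraints inherited from the two-dimensional stacked code with the dimensional constraint Eq.~\ref{DWeq}. Throughout I identify $e_i\leftrightarrow Z_i$, $m_i\leftrightarrow X_i$, so that all group-commutator and square computations are exactly those of the two-dimensional case, and I reuse the arithmetic already carried out for $d=3$ in Sec.~\ref{IVB2} verbatim.

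\emph{Step 1 (Clifford walls).} Because the eigenstate excitations correspond to the same logical Paulis as in two dimensions, their $S$ and $T$ data are unchanged, so any Clifford domain wall must be a product of the generators $h_i$, $s_{ij}$ of Sec.~\ref{sec:2DStack}. Imposing Eq.~\ref{DWeq}: $h_i$ would transform the $M$-dimensional flux $m_i$ into the $E$-dimensional charge $e_i$ with $E=d-2-M<M$, which no generalised domain wall can do (a $(d-1)$-dimensional wall preserves excitation dimension, and a lower-dimensional one acts only on a proper subregion), so $h_i$ is forbidden. The composite $c_{ij}=h_j s_{ij} h_j$ appends the $M$-dimensional $m_j$ to $m_i$ and the $E$-dimensional $e_i$ to $e_j$, hence has dimension $d-1$; $s^{(2)}_{ij}=s_{ij}$ appends the $E$-dimensional charge $e_j$ to the $M$-dimensional flux $m_i$, hence has dimension $d-1-M+E=2(d-M-1)-1$, which is at most $d-1$ exactly because $M\ge \tfrac{d}{2}-1$. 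Since every coset representative of $\langle h_i,s_{ij}\rangle$ modulo $\langle c_{ij},s^{(2)}_{ij}\rangle$ applies a Hadamard on at least one code $i$ and thus maps $m_i$ to an excitation of dimension $E\neq M$, the admissible Clifford walls are exactly $\langle c_{ij},s^{(2)}_{ij}\rangle$, corresponding to $\overline{CNOT}_{ij}$ and $\overline{CZ}_{ij}$ as in Sec.~\ref{IVB2}.

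\emph{Step 2 ($\mathcal{C}_3\setminus\mathcal{C}_2$ walls).} A $\mathcal{C}_3$ wall maps eigenstate excitations to $\mathcal{C}_2$ excitations, which are products of $c_{ij}$-type excitations (dimension $d-1$) and $s^{(2)}_{ij}$-type excitations (dimension $2(d-M-1)-1$). A $c_{ij}$-type excitation has dimension $d-1>d-2\ge M$, larger than any eigenstate excitation, so it cannot be created from eigenstate excitations; thus a genuinely new wall must send some flux to a composite involving an $s^{(2)}$-type excitation. Composing with the swap walls $w_{ij}=c_{ij}c_{ji}c_{ij}$ I may assume the wall fixes electric charges, and since $\dim s^{(2)}>E$ it can neither append an $s^{(2)}$ to a charge nor replace a charge by one. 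For the flux sector I reuse the computations of Sec.~\ref{IVB2}: comparing $S_{m_i,e_i}=K(X_i,Z_i)=-1$ with $S_{m_l s^{(2)}_{jk},e_i}=K(X_lCZ_{jk},Z_i)=(-1)^{\delta_{il}}$ forces the flux part to remain $m_i$; comparing $T_{m_i,m_i}=X_i^2=1$ with $(X_iCZ_{jk})^2$ forces $i\notin\{j,k\}$; and the displayed braiding checks $S_{m_is^{(2)}_{jk},e_l}=S_{m_i,e_l}$ and $S_{m_is^{(2)}_{jk},m_js^{(2)}_{ki}}=1=S_{m_i,m_j}$ confirm that the symmetrised wall $s^{(3)}_{ijk}$ (well defined since $n\ge 3$) is consistent, with dimension $d-1-M+(2(d-M-1)-1)=3(d-M-1)-1$ and corresponding LPLO $\overline{CCZ}_{ijk}$. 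Finally, a product of $s^{(3)}$ walls carries $m_i$ to $m_i$ times a product of $s^{(2)}$'s and nothing more, so $\langle c_{ij},s^{(2)}_{ij},s^{(3)}_{ijk}\rangle$ is closed under the admissible operations and therefore contains all $\mathcal{C}_3$ walls.

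\emph{Main obstacle.} The braiding and exchange arithmetic is routine and already done for $d=3$; the real work is the completeness bookkeeping—showing that Eq.~\ref{DWeq}, applied level by level, prunes the two-dimensional candidate group down to precisely $\langle c_{ij},s^{(2)}_{ij}\rangle$ at $\mathcal{C}_2$ and leaves only the symmetric $s^{(3)}_{ijk}$ at $\mathcal{C}_3$, in particular ruling out asymmetric and multi-$s^{(2)}$ variants and all walls acting nontrivially on charges. One should also flag the borderline regime $\tfrac{d}{2}-1<M<\tfrac{2d-3}{3}$, in which $3(d-M-1)-1>d-1$ so that $s^{(3)}_{ijk}$ is not realisable and the $\mathcal{C}_3$ walls simply coincide with the Clifford walls, consistently with the statement.
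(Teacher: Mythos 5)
Your proposal is correct and follows essentially the same route as the paper: it takes the braiding/exchange-admissible walls of the two-dimensional stacked code as the candidate set (since the eigenstate excitations carry the same logical-Pauli data), prunes them with the dimensional constraint of Eq.~\ref{DWeq} to kill $h_i$ and isolate $\langle c_{ij},s^{(2)}_{ij}\rangle$ at the Clifford level, and then reuses the $S$- and $T$-matrix computations of Sec.~\ref{IVB2} to show that the only new $\mathcal{C}_3$ wall is the symmetrised $s^{(3)}_{ijk}$, with the stated dimensions. The only differences are cosmetic — you inline the Sec.~\ref{IVB2} calculations rather than citing them, and your coset/Lagrangian-preservation remark makes the completeness of the Clifford-level pruning slightly more explicit than the paper's appeal to the three-dimensional case.
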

\begin{proof}
We consider first Clifford domain walls, and note that these correspond to those admitted by the three dimensional stacked code, discussed in Sec.~\ref{IVB2}. To see this observe that Clifford domain walls here may either permute eigenstate excitations of the same dimension, or append an $E$ dimensional excitation to an $M$ dimensional one. By Eq.~\ref{DWeq}, the dimensions of walls of these types will be $d-1$ and $2(d-M)-3$ respectively. For the three dimensional stacked code (where $d=3$, $M=1$) this corresponds to walls of dimension $2$ and $1$ respectively. Thus, any Clifford domain wall which is permitted by braiding and exchange statistics will also be allowed in the three dimensional stacked code since it is not excluded by dimensional constraints. So, since any wall that does not satisfy these braiding and exchange constraints will not be permitted for any stacked code, the Clifford domain walls allowed for any  code where $M>E$ will indeed be the same as for the three dimensional stacked code. That is, the Clifford domain walls are generated by the following generalised domain walls:
\begin{align}
c_{ij}: & \, m_i \to m_im_j, \, e_j \to e_i e_j\\
s_{ij}^{(2)}: & \, m_i \to m_ie_j,\, m_j \to e_im_j
\end{align}
with $c_{ij}$ of dimension $d-1$ and $s_{ij}^{(2)}$ of dimension $2(d-M-1)-1$.

We may now consider $\mathcal{C}_3$ domain walls. Recall that these are the walls which correspond to LPLOs in the third level of the Clifford hierarchy. Since $c_{ij}$ is of dimension higher than any eigenstate excitation, it cannot be involved in any $\mathcal{C}_3$ domain wall. Thus, the only types of $\mathcal{C}_3 \setminus \mathcal{C}_2$ domain walls possible must map $m$ type excitations to excitations involving $s^{(2)}$. As demonstrated in the case of the three dimensional stacked code, the only domain wall of this form allowed is
\begin{equation}
s_{ijk}^{(3)}: m_i \to m_i s_{jk}^{(2)}, \, m_j \to m_j s_{ki}^{(2)}, \, m_k \to m_k s_{ij}^{(2)}
\end{equation}
By Eq.~\ref{DWeq}, this will be $3(d-M-1)-1$ dimensional, and so is allowed for any code where $\frac{d}{d-M-1}\geq 3$. So, to summarise, the only types of $\mathcal{C}_3$ domain walls possible are generated by 
\begin{align}
c_{ij} &\text{ of dimension } d-1\\
s_{ij}^{(2)} &\text{ of dimension } 2(d-M-1)-1\\
s_{ijk}^{(3)} &\text{ of dimension } 3(d-M-1)-1.
\end{align}
\end{proof}
This lemma is sufficient to illustrate the pattern of generalised domain walls at all levels of the Clifford hierarchy. This is summarised by our third and final lemma.
\begin{lem}\label{l3}
For $k \geq 3$, the $\mathcal{C}_k \setminus \mathcal{C}_{k-1}$ domain walls possible in a code locally equivalent to $n \geq k$ identical copies of a $d$ dimensional toric code are products of
\begin{align}
s_{i_1, \ldots , i_k}^{(k)}: m_{i_a} \to m_{i_a} s^{(k-1)}_{i_{a+1}, \ldots i_k, i_1, \ldots i_{a-1}}
\end{align}
with $\mathcal{C}_{k-1}$ domain walls. Moreover, $s_{i_1, \ldots , i_k}^{(k)}$ walls are of dimension $k(d-M-1)-1$.
\end{lem}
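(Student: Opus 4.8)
The plan is to argue by induction on $k$, the base case $k=3$ being exactly Lemma~\ref{l2}. I would work throughout in the regime $M\geq\frac{d}{2}-1$ relevant to Theorem~\ref{Th2} (the case $M<\frac{d}{2}-1$ being symmetric under exchanging $e$ and $m$ and conjugating by Hadamards on every code), and I would first dispose of the borderline case: when $M=\frac{d}{2}-1$ the claimed dimension $k(d-M-1)-1$ exceeds $d-1$ for every $k\geq3$, so by Lemma~\ref{l1} the statement is vacuous there, and we may assume $M>E$. Assuming the result at level $k-1$, the $\mathcal{C}_{k-1}\setminus\mathcal{C}_{k-2}$ domain walls are, modulo products with $\mathcal{C}_{k-2}$ walls, the $s^{(k-1)}$ walls; these (equivalently, the corresponding $\mathcal{C}_{k-1}$ excitations) have dimension $(k-1)(d-M-1)-1$ and correspond to the LPLO $C^{k-2}Z$. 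By the procedure of Sec.~\ref{VB}, a $\mathcal{C}_k\setminus\mathcal{C}_{k-1}$ domain wall is precisely a permutation of excitations sending some eigenstate excitation to a $\mathcal{C}_{k-1}\setminus\mathcal{C}_{k-2}$ excitation, preserving all braiding and exchange relations, and respecting the dimensional constraint Eq.~\ref{DWeq}.

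The first step is the dimensional reduction. The wall $c_{ij}$ has dimension $d-1$, larger than any eigenstate excitation, so it cannot be appended; and since $E=d-2-M<(k-1)(d-M-1)-1$ for $k\geq3$, an $s^{(k-1)}$-type excitation is too large to be appended to an electric charge either. Hence every $\mathcal{C}_k\setminus\mathcal{C}_{k-1}$ wall must append an $s^{(k-1)}$-type excitation to an $M$-dimensional magnetic flux, and Eq.~\ref{DWeq} then gives its dimension as $(d-1-M)+\big((k-1)(d-M-1)-1\big)=k(d-M-1)-1$, exactly as claimed; such a wall can exist only when this is at most $d-1$, i.e.\ $k\leq\frac{d}{d-M-1}$, and when $n\geq k$. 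Next, composing with the Clifford walls $c_{ij}$ and $w_{ij}=c_{ij}c_{ji}c_{ij}$ (which lie in $\mathcal{C}_2\subseteq\mathcal{C}_{k-1}$) lets me assume the wall fixes every electric charge, so on the fluxes it acts as $m_i\mapsto m_{l(i)}\,s^{(k-1)}_{J_i}$ for an ordered index set $J_i$. Braiding with $e_i$ forces $l(i)=i$: since $s^{(k-1)}$ corresponds to the diagonal gate $C^{k-2}Z$, $S_{s^{(k-1)}_{J_i},e_i}=K(C^{k-2}Z_{J_i},Z_i)=1$, so $S_{m_{l(i)}s^{(k-1)}_{J_i},e_i}=(-1)^{\delta_{i,l(i)}}$ equals $S_{m_i,e_i}=-1$ only when $l(i)=i$. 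Then the exchange identity Eq.~\ref{eq:exchid} gives $T_{m_i s^{(k-1)}_{J_i},\,m_i s^{(k-1)}_{J_i}}=T_{m_i,m_i}\,S_{m_i,s^{(k-1)}_{J_i}}\,T_{s^{(k-1)}_{J_i},s^{(k-1)}_{J_i}}=S_{m_i,s^{(k-1)}_{J_i}}$, since $X_i^2=I$ and $(C^{k-2}Z)^2=I$; preservation of $T_{m_i,m_i}=1$ therefore forces $K(X_i,C^{k-2}Z_{J_i})=1$, i.e.\ $i\notin J_i$. As $|J_i|=k-1$ and only the $k$ indices $i_1,\dots,i_k$ are in play, this pins down $J_i=\{i_1,\dots,i_k\}\setminus\{i\}$ as a set.

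\textbf{The hard part is the remaining rigidity}: showing that the cyclic \emph{ordering} inside each $s^{(k-1)}_{J_i}$ is forced, so that the wall is exactly $s^{(k)}_{i_1,\dots,i_k}$ up to products with $\mathcal{C}_{k-1}$ walls. My plan here is to impose the outstanding braiding constraints $S_{m_i s^{(k-1)}_{J_i},\,m_j s^{(k-1)}_{J_j}}=S_{m_i,m_j}=1$ and $S_{m_i s^{(k-1)}_{J_i},e_j}=S_{m_i,e_j}=1$ for $i\neq j$, expand the group commutators using the induction hypothesis for the explicit action of $s^{(k-1)}$ together with the triviality of nested commutators established in Sec.~\ref{IVA2}, and read off that these relations hold only when the orderings of the various $J_i$ interlock into a single cyclic word; one then checks that any two interlocking choices differ by a product of $\mathcal{C}_{k-1}$ walls. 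This is precisely the computation carried out explicitly for $s^{(3)}_{ijk}$ in Sec.~\ref{IVB2}, now run one level up with $C^{k-2}Z$ in place of $CZ$ — heavier bookkeeping but no new phenomenon — and I expect it, rather than any conceptual point, to be the main labour.

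Finally I would verify that $s^{(k)}_{i_1,\dots,i_k}$ is genuinely an allowed wall, i.e.\ that the associated permutation preserves the full $S$ and $T$ data. The cleanest route is to note that under the correspondence of Sec.~\ref{VB} this wall corresponds to the LPLO $C^{k-1}Z$ on the relevant $k$ logical qubits, a bona fide unitary, so that the identities $(\bar U\bar A\bar U^\dag)^2=\bar U\bar A^2\bar U^\dag$ and $K(\bar U\bar A\bar U^\dag,\bar U\bar B\bar U^\dag)=\bar U K(\bar A,\bar B)\bar U^\dag$ make conjugation by it automatically consistent with every square and group commutator of logical Paulis; alternatively one repeats the explicit $S$/$T$ verification of Sec.~\ref{IVB2} inductively. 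Combining the steps, every $\mathcal{C}_k\setminus\mathcal{C}_{k-1}$ domain wall is a product of an $s^{(k)}$ wall with a $\mathcal{C}_{k-1}$ wall, and $s^{(k)}$ has dimension $k(d-M-1)-1$, closing the induction.
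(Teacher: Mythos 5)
Your overall strategy is the paper's: induct on $k$ with Lemma~\ref{l2} as the base case, use the dimensional constraint of Eq.~\ref{DWeq} to force any $\mathcal{C}_k\setminus\mathcal{C}_{k-1}$ wall to decorate magnetic fluxes (and only fluxes) with $s^{(k-1)}$ excitations, use $S_{\cdot,\,e_i}$ to force $l(i)=i$ and the exchange identity to force $i\notin J_i$, compute the dimension $k(d-M-1)-1$ the same way, and certify that $s^{(k)}$ itself is allowed because it corresponds to the unitary $C^{k-1}Z$, whose conjugation action automatically respects squares and group commutators. All of that matches the paper and is carried out correctly.

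The genuine gap is the step you yourself flag as ``the hard part'' and then only sketch. First, the difficulty you anticipate there is not the real one: $s^{(k-1)}_J$ is the boundary excitation of the symmetric gate $C^{k-2}Z_J$, so it depends on $J$ only as a set and there is no cyclic ordering to rigidify --- the cyclic notation in the lemma is bookkeeping for which complementary set decorates which flux. What actually remains to be proved is (i) that if one flux is decorated then all $k$ are (the paper gets this from $K(X_{i_1},C^{k-2}Z_{i_1,\ldots,i_{k-1}})$ being a $C^{k-3}Z$-type operator not expressible via lower $C^rZ$ gates, so an undecorated $m_j$ would braid nontrivially with a decorated $m_i$); (ii) that each decoration is a \emph{single} $s^{(k-1)}$ on exactly the complementary index set rather than a product of several $s^{(k-1)}$'s or an $s^{(k-1)}$ combined with lower-level excitations --- you assert $|J_i|=k-1$ and a single factor without argument; and (iii) uniqueness up to $\mathcal{C}_{k-1}$ walls. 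The paper disposes of (ii) and (iii) together with a trick you miss: having already verified that $s^{(k)}$ is an allowed wall, it multiplies an arbitrary $\mathcal{C}_k$ wall by a $w_{ij}$-conjugate of $s^{(k)}$ to strip off the $s^{(k-1)}$ content, lands in $\mathcal{C}_{k-1}$, and invokes the inductive classification --- no exhaustive case analysis of possible decorations is needed. Without that reduction (or an actual execution of the constraint computation you defer), the completeness half of the lemma is not established.
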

\begin{proof}
We begin by showing by induction that wall $s_{i_1, \ldots , i_k}^{(k)}$ corresponds to $C^{k-1}Z_{i_1, \ldots , i_k}$. This is simple, since we know that $s^{(2)}_{ij}$ corresponds to $CZ_{ij}$, and that, assuming $s^{(k-1)}_{i_{a+1}, \ldots i_k, i_1, \ldots i_{a-1}}$ corresponds to $C^{k-2}Z_{i_{a+1}, \ldots i_k, i_1, \ldots i_{a-1}}$ then $s_{i_1, \ldots , i_k}^{(k)}$ corresponds to the gate which maps $\bar{X}_{i_a} \to {X}_{i_a}C^{k-2}Z_{i_{a+1}, \ldots i_k, i_1, \ldots i_{a-1}}$ which is $C^{k-1}Z_{i_1, \ldots , i_k}$.

We now prove the proposition by induction on the level of the Clifford hierarchy, $k$. We have already shown, in lemma \ref{l2}, that it is true for $k=3$. So, begin by assuming that $\mathcal{C}_{k-1}$ domain walls are generated by $s^{(k-1)}$ walls and $\mathcal{C}_{k-2}$ walls. We first show that $s^{(k)}$ is an allowed wall, and then that all other $\mathcal{C}_k$ walls are generated by $s^{(k)}$ together with $\mathcal{C}_{k-1}$ walls.

To do the first step, note that $\left({X}_{i_a}C^{k-1}Z_{i_{a+1}, \ldots i_k, i_1, \ldots i_{a-1}}\right)^2=I$ and so $s^{(k)}$ preserves exchange relations. Also, since $C^{k-1}Z$ is a unitary operator then it preserves commutation relations, and so its corresponding domain wall $s^{(k)}$ must preserve brading relations. Thus, $s^{(k)}$ is an allowed domain wall, which maps an eigenstate excitation to a $\mathcal{C}_{k-1}$ excitation, and hence is a $\mathcal{C}_k$ domain wall which is allowed.

Now, to do the second step, observe that if we have a $\mathcal{C}_k \setminus \mathcal{C}_{k-1}$ domain wall it must map electric charges to other electric charges, since they are lower dimension than any other excitations. Thus, it must map some magnetic flux to another excitation which includes an $s^{(k-1)}$ excitation, since this is the only type of $\mathcal{C}_{k-1}$ excitation. Moreover, since $K(X_{i_1}, C^{k-1}Z_{i_1, \ldots i_{k-1}})=C^{k-2}Z_{i_2 \ldots i_{k-1}}$ which cannot be made by a product of $C^rZ$ gates for $r<k-2$, if any magnetic flux is mapped to another excitation involving $s^{(k-1)}$ then they all must be, in order to preserve braiding relations. Also, since all $C^rZ$ type gates commute with each other, then in order to preserve braiding relations each $m$ must map to an excitation involving $m$. Thus, we conclude that any $\mathcal{C}_k$ domain wall must map a magnetic flux to an excitation involving both $m$ and $s^{(k-1)}$ excitations. 

Moreover, the $m$ and $s^{(k-1)}$ must come from non-overlapping codes, or the excitation will have non-trivial exchange relations. However, we have already seen that the wall that maps $m$ to an excitation consisting of only this $m$ and $s^{(k-1)}$ is allowed, and indeed corresponds to $s^{(k)}$. So, if we have another $\mathcal{C}_k$ domain wall we can always apply $w_{ij}=c_{ij}c_{ji}c_{ij}$ type walls to swap the codes of the output of the wall such that the code from which the magnetic flux originally came remains the same, and then apply $s^{(k)}$ to give a $\mathcal{C}_{k-1}$ domain wall. However, this domain wall must already be included in the walls already found, and so must be expressible as a product of $s^{(r)}$ and $c_{ij}$ type excitations, with $r<k$. Thus, this new $\mathcal{C}_k$ must be expressible as a product of this same combination of excitations along with $s^{(k)}$. Thus, all the $\mathcal{C}_k$ domain walls can be produced from products of $\mathcal{C}_{k-1}$ domain walls with $s^{(k)}$ type walls.

We can verify that $s^{(k)}$ has dimension $k(d-M-1)-1$ as claimed by induction on $k$. First, note that $s^{(2)}$ has dimension $2(d-M-1)-1$ as required. Now, assume that $s^{(k-1)}$ has dimension $(k-1)(d-M-1)-1$. Then, $s^{(k)}$ must transform a $(k-1)(d-M-1)-1$ dimensional region of an $M$ dimensional excitation. By Eq.~\ref{DWeq}, this requires a domain wall of dimension $d-1-(M-((k-1)(d-M-1)-1))=k(d-M-1)-1$.
\end{proof}
We may now finally prove theorem \ref{Th2}, to complete our classification.
\begin{proof} [Proof of Theorem \ref{Th2}]
By lemmas \ref{l1} and \ref{l2}, we know that $h_i$ is admitted iff $M=\frac{d}{2}-1$, and in that case it has dimension $d-1$. Since $c_{ij}$ acts over two toric codes, it clearly has as a necessary condition that $n\geq 2$. Given this condition, it is admitted in codes with $M=\frac{d}{2}-1$, by lemma \ref{l1}, since it may be realised by $c_{ij}=h_j s^{(2)}_{ij} h_j$. By lemma \ref{l2}, $c_{ij}$ is also admitted by codes with $M>\frac{d}{2}-1$, provided that $n \geq 2$. Thus, $c_{ij}$ is admitted iff $n \geq 2$, and, from lemmas \ref{l1} and \ref{l2}, always has dimension $d-1$.

Since $s^{(k)}$ type domain walls act across $k$ different toric codes, it is a necessary condition for its admission that $n\geq k$. Assuming this condition then, by lemmas \ref{l1} and \ref{l2}, $s_{ij}^{(2)}$ is always admitted, which is consistent with the claim of the theorem, since $d-M-1 \leq d-(\frac{d}{2}-1)-1=\frac{d}{2}$, and so $\frac{d}{d-M-1} \geq \frac{d}{d/2}=2$, and so the condition, $k \leq \frac{d}{d-M-1}$ is automatically satisfied for $k=2$. Moreover, by lemma \ref{l1}, when $M=\frac{d}{2}-1$, the dimension of the $s^{(2)}$ domain wall is $d-1$, which is equal to $2(d-(\frac{d}{2}-1))-1$, and so the claimed dimension of the wall in this case is correct. In the case where $M>\frac{d}{2}-1$, lemma \ref{l2} gives that we have the correct dimension for $s^{(2)}$ of $2(d-M-1)-1$.

For $k>2$, assume again the necessary condition that $n \geq k$. Then, we have from lemma \ref{l3} that a necessary and sufficient condition for $s^{(k)}$ to be admitted is that $d-1 \geq k(d-M-1)-1$, since the only additional requirement is that the largest dimension of a generalised domain wall admitted in the code is at least as great as the required dimension of the $s^{(k)}$. This implies the necessary and sufficient condition that $k \leq \frac{d}{d-M-1}$, as required. Moreover, lemma \ref{l3} gives that the dimension of $s^{(k)}$ is indeed $k(d-M-1)-1$.

Since lemmas \ref{l1},\ref{l2} and \ref{l3} collectively provide a full analysis of all possible domain walls, and we have verified that the listed walls are indeed admitted subject to the claimed conditions, then theorem \ref{Th2} does indeed account for all the possible generalised domain walls.
\end{proof}

\subsubsection{Generalised Classification}
\label{VB3}
The classification developed above may be generalised to the case of non-identical toric codes in a straightforward way. To see this, first note that the only differences between the toric codes will be the dimensions of excitations; there is no difference in terms of braiding or exchange relations. This means that we should expect the same types of generalised domain walls as in theorem \ref{Th2}, but with different dimensions, and hence different constraints on when those walls can be realised in a code.

Indeed, the correct generalisation of theorem \ref{Th2} is as summarised in theorem \ref{Th3} below. This result may be proven by a straightforward generalisation of the proof of theorem \ref{Th2}, which we omit for clarity and brevity.

\begin{theorem}\label{Th3}
Consider a code locally equivalent to $n$ toric codes, each of dimension $d$, such that code $i$ has magnetic and electric excitations of dimensions $M_i$ and $E_i$ respectively. Let $q_i = \max\left(M_i, E_i\right)$ and $x_i=1$ if $E_i > M_i$ or $0$ otherwise. Then, the generalised domain walls are products of:
\begin{itemize}
\item $h_i$, of dimension $d-1$, for all $i$ such that $M_i=\frac{d}{2}-1$
\item $h_i^{x_i}h_j^{x_j} c_{ij}h_i^{x_i}h_j^{x_j}$, of dimension $d-(q_i-q_j)-1$, for all $i,j$ such that $q_i \geq q_j$
\item $\left(\prod_{a=1}^k h_{i_a}^{x_{i_a}}\right) s^{(k)}_{i_1, \ldots i_k} \left(\prod_{a=1}^k h_{i_a}^{x_{i_a}}\right)$, of dimension $\sum_{a=1}^k \left(d-q_{i_a}-1\right)-1$, for all $k, i_1, \ldots i_k$ such that $\sum_{a=1}^k q_{i_a} \geq (k-1)d-k$
\end{itemize}
\end{theorem}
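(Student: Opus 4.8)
\noindent The plan is to reduce Theorem~\ref{Th3} to the situation already settled by Theorem~\ref{Th2} (and Lemmas~\ref{l1}--\ref{l3}), isolating the one place where having non-identical toric codes actually changes anything — the dimension bookkeeping of the walls — and leaving the purely algebraic classification of candidate walls untouched.

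\noindent First I would relabel. For each code $i$ with $E_i > M_i$, interchange the names of $e_i$ and $m_i$ (equivalently, compose with the $h_i$ wall on that code). After this relabeling every code carries its magnetic flux in the larger dimension $q_i = \max(M_i,E_i) \geq \frac{d}{2}-1$, with electric charge of dimension $d-2-q_i$. This is a fixed, invertible relabeling that sends generalised domain walls to generalised domain walls and preserves their dimensions, so it suffices to classify the walls in the relabeled stacked code and then translate back — the translation is exactly what reintroduces the $h_i^{x_i}$ conjugating factors in the statement. Next I would note that the squares and group commutators of the logical Pauli operators $e_i \leftrightarrow \bar Z_i$, $m_i \leftrightarrow \bar X_i$ depend only on the Pauli data and not on the dimensions of the excitations; hence the entire algebraic content of Lemmas~\ref{l1}--\ref{l3} and of the proof of Theorem~\ref{Th2} carries over verbatim. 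In particular the group of candidate permutations of excitations preserving all braiding and exchange relations is still generated by the $h_i$, the $c_{ij}$ (equivalently the $w_{ij}$) and the $s^{(k)}_{i_1,\dots,i_k}$, with the $\mathcal{C}_k\setminus\mathcal{C}_{k-1}$ layer generated by $s^{(k)}$ together with lower-level walls exactly as before. Only \emph{which} of these candidates are dimensionally realisable changes.

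\noindent The core computation is therefore to re-evaluate the dimensions using Eq.~\ref{DWeq}, $k=d-1-j+l$, in the non-identical setting. The $h_i$ wall swaps $e_i$ and $m_i$, so by the dimension-preservation requirement of Sec.~\ref{IVA2} it exists only when $M_i=E_i=\frac{d}{2}-1$, acting then on the whole excitation ($l=j$), dimension $d-1$. For $q_i \geq q_j$ the wall $c_{ij}$ appends a full $m_j$ (dimension $q_j$) to a $q_j$-dimensional slice of $m_i$ (dimension $q_i$), so $j=q_i$, $l=q_j$, giving dimension $d-(q_i-q_j)-1 \leq d-1$; it is thus always realisable when $n\geq 2$. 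For the $s^{(k)}$ walls I would induct on $k$: with $s^{(k-1)}_{i_{a+1},\dots,i_{a-1}}$ of dimension $\sum_{b\neq a}(d-q_{i_b}-1)-1$, the wall $s^{(k)}$ appends that non-eigenstate excitation to $m_{i_a}$ (dimension $q_{i_a}$), and Eq.~\ref{DWeq} gives $\dim s^{(k)}_{i_1,\dots,i_k}=\sum_{a=1}^{k}(d-q_{i_a}-1)-1$; the base case $s^{(2)}_{ij}$, appending $e_j$ to $m_i$, gives $2d-3-q_i-q_j$, consistent with the formula (and with Lemma~\ref{l2} when all $q_i=M$). Since a generalised domain wall cannot have dimension greater than $d-1$, $s^{(k)}_{i_1,\dots,i_k}$ is realisable precisely when $\sum_{a=1}^{k}(d-q_{i_a}-1)-1\leq d-1$, i.e.\ $\sum_{a=1}^{k}q_{i_a}\geq(k-1)d-k$, together with the evident bookkeeping requirements $n\geq k$ and distinct indices. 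Necessity (that nothing outside these three families occurs) is covered because the candidate set is unchanged from Theorem~\ref{Th2} and any candidate violating the above is forbidden either by the $d-1$ ceiling or by the prohibition on a wall increasing the dimension of a crossing excitation. Undoing the relabeling of the first step then yields exactly the three families in the statement.

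\noindent The hard part will be the dimension bookkeeping in the inductive step for $s^{(k)}$: one must check that the geometric consistency conditions implicit in Eq.~\ref{DWeq} — non-negativity of the intersection dimension and the requirement $0\le l\le j$ that a $k$-dimensional wall really can act on the relevant $l$-dimensional region of each $j$-dimensional crossing excitation — are automatically satisfied once $\dim s^{(k)}\le d-1$, so that the dimensional ceiling is the sole genuine constraint. This is routine but is precisely where the distinct dimensions $q_{i_a}$ must be tracked carefully; everything else is identical to the proof of Theorem~\ref{Th2}, which is why the paper states the generalisation to be straightforward.
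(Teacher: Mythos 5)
Your proposal is correct and follows exactly the route the paper intends: the paper omits the proof but describes it as a straightforward generalisation of Theorem~\ref{Th2} obtained by conjugating with $h_i$ on each code with $E_i>M_i$ and redoing the dimension bookkeeping via Eq.~\ref{DWeq}, which is precisely your relabeling-plus-dimension-count argument. Your verification that the $d-1$ ceiling on wall dimension is equivalent to the geometric condition $l\le j$ in the inductive step is the right way to close the one detail the paper leaves implicit.
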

Note first that if we assume that $M_i \geq E_i$ for all $i$ then this theorem differs from theorem \ref{Th2} only by the dimensions of $c$ and $s^{(k)}$ being generalised. We have relaxed that assumption, however, since with non-identical codes there is a possibility of having some codes with $M_i\geq E_i$ and others with $E_i < M_i$. This leads to non-trivial differences from the case where $M_i \geq E_i$ for all codes, which is dealt with by conjugating walls on any code where $E_i < M_i$ by the wall which swaps $e_i$ and $m_i$.

As with theorem \ref{Th2}, this classification of generalised domain walls also naturally gives the LPLOs admitted by a general stacked code as follows.
\begin{cor}\label{cor3}
The LPLOs admitted by a stacked code with parameters as in theorem \ref{Th3} are products of Pauli operators and the following.
\begin{itemize}
\item $H_i$, of dimension $d$, iff $M_i=\frac{d}{2}-1$
\item $H_i^{x_i}H_j^{x_j}CNOT_{ij}H_i^{x_i}H_j^{x_j}$, of dimension $d-(p_i-p_j)$, for all $i,j$ such that $q_i \geq q_j$
\item $\left(\prod_{a=1}^k H_{i_a}^{x_{i_a}}\right)C^kZ_{i_1, \ldots i_k}\left(\prod_{a=1}^k H_{i_a}^{x_{i_a}}\right)$, of dimension $\sum_{a=1}^k(d-q_{a_i}-1)$, for all $k$ such that $\sum_{a=1}^k q_{i_a} \geq (k-1)d-k$.
\end{itemize}
\end{cor}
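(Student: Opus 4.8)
The plan is to obtain Corollary \ref{cor3} directly from Theorem \ref{Th3} by pushing the classification of generalised domain walls through the wall-to-LPLO correspondence established for stacked codes in Sec.~\ref{subsec:2DLPLO} and extended in Sec.~\ref{IVA2}. Recall that, for a stacked code, this correspondence is bijective and sends a $k$-dimensional generalised domain wall in $\mathcal{C}_n$ to the $(k+1)$-dimensional LPLO in $\mathcal{C}_n$ obtained by nucleating the wall and sweeping it onto the boundaries of each surface code in its support; moreover it is a group isomorphism, since composition of walls corresponds to composition of the associated operators once one uses the tensor-product identities for group commutators and squares from Sec.~\ref{sec:2DStack}. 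So the first thing I would do is simply quote this correspondence and note that it reduces the corollary to translating each generator appearing in Theorem \ref{Th3}, together with its dimension and its admissibility condition, into the language of logical gates.

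The second step is to carry out that translation. Using the dictionary $e_i \leftrightarrow \bar Z_i$, $m_i \leftrightarrow \bar X_i$ between eigenstate excitations and logical Paulis, a wall acting on excitations by a given permutation corresponds to the logical operator that conjugates the Pauli group by the same permutation. Thus $h_i$ gives the logical Hadamard $\bar H_i$; $c_{ij}$ gives $\overline{CNOT}_{ij}$; and, reusing the induction already performed in the proof of Lemma \ref{l3}, $s^{(k)}_{i_1,\dots,i_k}$ gives the multiply-controlled phase gate $C^{k-1}Z_{i_1,\dots,i_k}$ identified there. The conjugating Hadamard factors $H_{i_a}^{x_{i_a}}$ in the statement are precisely the LPLOs of the wall-conjugating factors $h_{i_a}^{x_{i_a}}$ that Theorem \ref{Th3} uses to handle codes with $E_i > M_i$, so they come along for free under the isomorphism.

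The third step is bookkeeping of dimensions and conditions: each LPLO has dimension exactly one greater than its wall, so I would add one to every wall dimension appearing in Theorem \ref{Th3}, while the admissibility conditions ($M_i = \tfrac{d}{2}-1$ for $\bar H_i$, $q_i \geq q_j$ for the conjugated $\overline{CNOT}$, and $\sum_a q_{i_a} \geq (k-1)d - k$ for the conjugated controlled-$Z$) transfer verbatim from the wall conditions. Since Theorem \ref{Th3} enumerates \emph{all} generalised domain walls and the correspondence is a group isomorphism onto the LPLO group, the images of these generators, together with the Pauli operators (which are always locality preserving and correspond to the trivial wall together with the Pauli group), generate precisely the LPLOs of the stacked code, yielding the claimed list.

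I do not expect a genuinely hard step here: the substantive work was already done in Theorems \ref{Th2} and \ref{Th3} and in the correspondence of Sec.~\ref{subsec:2DLPLO}. The one point that needs care rather than ingenuity is confirming that the wall-to-LPLO map is simultaneously a group isomorphism \emph{and} dimension-shifting-by-one across \emph{all} levels of the Clifford hierarchy, so that arbitrary \emph{products} of the generating walls are sent to products of the generating gates with the dimensions claimed; this is exactly what the tensor-product decompositions of $K(\cdot,\cdot)$ and of squaring, together with the uniqueness argument for the operator implementing a given wall, were set up to guarantee.
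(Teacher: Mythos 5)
Your proposal is correct and takes essentially the same approach as the paper: the corollary is read off from Theorem~\ref{Th3} by applying the bijective, dimension-shifting-by-one correspondence between generalised domain walls and LPLOs for stacked codes, translating $h_i$, $c_{ij}$ and $s^{(k)}_{i_1,\ldots,i_k}$ into $\bar{H}_i$, $\overline{CNOT}_{ij}$ and $C^{k-1}Z_{i_1,\ldots,i_k}$ via the excitation--Pauli dictionary, with the conjugating $H^{x_i}$ factors carried along from the conjugating $h^{x_i}$ walls. The paper presents this corollary as following ``naturally'' from the wall classification without a separate written proof, so your more explicit account of why the correspondence is a dimension-shifting group isomorphism is consistent with, and somewhat more detailed than, the original.
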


\subsubsection{Abelian Quantum Double Models}
The approach and results we have outlined in this section may be generalised beyond the topological stabiliser codes that are the focus of this paper. In particular, we briefly consider codes that are quantum doubles of abelian groups \cite{KitaevA,YoshidaC}. 

The quantum double of a group $G$ has generalised logical Pauli operators such that $\bar{X}$ and $\bar{Z}$ type operators each form groups isomorphic to $G$. It then has corresponding groups of electric charges and magnetic fluxes corresponding to these generalised Paulis operators. A stacked code with $n$ toric codes is the quantum double of $\mathbb{Z}_2^n$. We can also consider, however, the quantum double of an arbitrary abelian group $G=\prod_{i=1}^n \mathbb{Z}_{p_i^{r_i}}$ for $n,r_i \in \mathbb{N}$ and primes $p_i$. With appropriate boundary conditions, we can consider each cyclic group $\mathbb{Z}_{p_i}$ to give a corresponding generalised surface code. By generalised surface code we mean that it is equivalent to a surface code but with qudits (for $d=p_i$) instead of qubits. We can thus consider the code to be a stacked code made up of these $n$ disjoint generalised surface codes. With this set up, the results of theorem \ref{Th3} still holds, but with the definition of each wall adapted to ensure the altered braiding and exchange relations of excitations in the code are preserved, and with the additional constraint that walls across multiple codes are only allowed between codes with the same $p_i$. This naturally also gives rise to an analagous group of LPLOs to those in corollary \ref{cor3}.

\subsection{Logical Operators from Domain Walls}
\label{VC}
Equipped with a classification of the generalised domain walls possible in codes equivalent to a finite number of copies of a toric code, we may now discuss how to adjust corollary~\ref{cor3} to give the logic gates that are realisable as LPLOs for codes with boundary conditions that make them distinct from stacked codes. These boundary conditions manifest themselves as alterations to the one-to-one relationship between excitations and LPLOs that exists for stacked codes. We consider two types of alterations to produce different boundary conditions. One type is attaching multiple surface codes along a common boundary.  This can lead to excitations that are distinct in the stacked code giving rise to equivalent logical operators in the code with these boundaries. Examples of this are colour codes that encode a single qubit \cite{Kubica}.~The other type is to allow for holes to be added to the surface codes. These boundaries can allow for multiple topologically distinct paths of excitations, and so can lead to multiple logical operators arising from the same excitation. An example of this type of alteration is the toric code (in its standard formulation on a torus). We consider each of these cases in turn.

\subsubsection{Attaching Surface Codes}
We first consider the LPLOs that may arise from a code where surface codes in a stacked code are attached along a boundary. We lay out a procedure for finding these logical operators. At each stage we illustrate how this may be done with the example of the $d$ dimensional colour code on a hypertetrahedral lattice. This code is equivalent to $d$ surface codes attached along a common $d-1$ dimensional boundary~\cite{Kubica}. Our procedure is as follows.

\paragraph*{Eigenstate Excitation Equivalences:} Consider two eigenstate excitations to be equivalent if and only if they may be interchanged at a transparent boundary (i.e.~a boundary at which two surface codes are attached). Define a reduced set of eigenstate excitations by quotienting by this equivalence relation. We may relate each of these equivalence classes with a Pauli logical operator produced by a path followed by any representative of the class. For the colour code, we have boundaries that allow any electric charges to be interchanged, and similarly for any magnetic fluxes. Thus, we have $e_i \sim e_j$ and $m_i \sim m_j$ for $1\leq j,k \leq d$. We label the two equivalence classes of excitations by $e$ and $m$ respectively. They correspond to $\bar{Z}$ and $\bar{X}$ logical operators respectively. Naturally, we also have equivalence classes $1$ and $em$, that correspond to $\bar{I}$ and $\bar{Y} \propto \bar{X}\bar{Z}$.

\paragraph*{Clifford Domain Walls:}Consider the Clifford domain walls for the code. Consider two such walls to be equivalent if and only if they differ by composition with the action of a transparent boundary. We thereby define a reduced set of Clifford domain walls by quotienting by this equivalence relation. For the colour code, we have that boundaries allow the codes from which excitations come to be permuted. So, Clifford domain walls are equivalent if and only if their action on excitations differs only by the codes of these excitations, not their type (i.e.~$e$, $m$, etc.). For the colour code, this gives us up to two non-trivial equivalence classes of Clifford domain walls. Specifically, we have $s_{ij}^{(2)} \sim s_{kl}^{(2)}$ for $1 \leq i,j,k,l \leq d$, which gives us an equivalence class we label $s^{(2)}$. For codes where $h_i$ type walls are admitted, we also have $h_i \sim h_j$ for $1 \leq i,j \leq d$, which gives us a second equivalence class we label $h$. Note that the equivalence class of $c_{ij}$ type walls is trivial, since $c_{ij}$ only permutes elements of each equivalence class of eigenstate excitations, i.e.~it maps electric charges to electric charges and magnetic fluxes to magnetic fluxes. This means that $c_{ij} \sim 1$.

\paragraph*{Clifford LPLOs:} Consider a representative of each equivalence class of Clifford domain walls. The corresponding LPLO for this class is then identified by considering the action of the representative wall on eigenstate excitations from the surface codes the wall acts non-trivially on. Specifically, the operator is that which acts by conjugation to give the logical operator mappings corresponding to replacing the eigenstate excitations with the Pauli logical operators corresponding to their equivalence classes. For the colour code, we first consider the (possible) equivalence class $h$. This has representative $h_i: e_i \leftrightarrow m_i$ and so acts as $h: e \leftrightarrow m$. Thus, $h$ corresponds to the LPLO $\bar{H}: \bar{X} \leftrightarrow \bar{Z}$. That is, it is a logical Hadamard operator on the single logical qubit. Consider now the other equivalence class of Clifford domain walls, $s^{(2)}$. This has representative $s_{ij}^{(2)}: m_i \to m_i e_j$, $m_j \to e_i m_j$. Thus, it acts as $s: m \to em$ and so corresponds to an LPLO $\bar{R}_2: \bar{X} \to \bar{Y}$. That is, it is a logical phase operator on the single logical qubit. Note that, as required, the logical operators we find do not depend on the particular representatives of the equivalence classes that we choose. This is because the transparent boundaries correspond to symmetries of the underlying stacked code, and so the equivalences of eigenstate excitations carry through appropriately to equivalences of the Clifford domain walls.

\paragraph*{$\mathcal{C}_3$ Domain Walls:} We now consider each equivalence class of Clifford domain walls to be a Clifford excitation of the code. A reduced set of $\mathcal{C}_3$ domain walls is then produced by considering as equivalent $\mathcal{C}_3$ domain walls for the stacked code that act equivalently on this reduced group of Clifford excitations. This set of domain walls is related to LPLOs in an analogous way to that for Clifford domain walls. For the colour code, we now have two non-eigenstate, Clifford excitations; $h$ and $s^{(2)}$. All the possible new $\mathcal{C}_3$ domain walls differ only by the codes from which the excitations they act on come. Specifically, $s_{ijk}^{(3)} \sim s_{lmn}^{(3)}$ for $1 \leq i,j,k,l,m,n \leq d$. Thus, we have a single equivalence class of $\mathcal{C}_3 \setminus \mathcal{C}_2$ domain walls, which we label $s^{(3)}$. Since any $s^{(3)}$ type wall maps $m$ type excitations to $ms^{(2)}$ type excitations, we may conclude that the LPLO corresponding to $s^{(3)}$ is $\bar{R}_3: \bar{X} \to \bar{X}\bar{R}_2$.

\paragraph*{Iteration:} The process is repeated for increasing levels of the Clifford hierarchy until all of the generalised domain walls admitted by the code are exhausted. For the colour code, continuing this process will give us a new equivalence class of generalised domain walls at each level of the Clifford hierarchy, up to the $d$th level. These classes are of the form $s^{(k)}: m \to ms^{(k-1)}$, and correspond to $\bar{R}_k: \bar{X} \to \bar{X}\bar{R}_{k-1}$ logical operators.

Thus, we have an approach to finding the LPLOs admitted in a code by using the set of generalised domain walls we identified in theorem \ref{Th2}. Our analysis of the $d$-dimensional colour code on a hypertetrahedral lattice also yields the following result, consistent with the analysis of \cite{Kubica}, which may be considered a corollary to the classification of generalised domain walls in theorem \ref{Th2}.

\begin{cor}
The $d$-dimensional colour code on a tetrahedral lattice, encoding a single logical qubit, admits LPLOs generated by Pauli operators, the Hadamard operator if and only if $d=2$, and $\bar{R}_k$ operators for all $k \leq d$.
\end{cor}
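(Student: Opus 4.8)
The plan is to obtain this corollary as a direct consequence of Theorem~\ref{Th2} together with the ``Attaching Surface Codes'' procedure of Sec.~\ref{VC}. First I would recall~\cite{Kubica} that the $d$-dimensional colour code on a hypertetrahedral lattice is locally equivalent to a stack of $n=d$ identical $d$-dimensional surface codes, each having magnetic fluxes of maximal dimension $M=d-2$ (hence $E=0$), attached along a single common $(d-1)$-dimensional transparent boundary. Since $d-M-1=1$, this gives $\min\!\left(n,\tfrac{d}{d-M-1}\right)=\min(d,d)=d$, which fixes the top of the tower of logical operators.

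Next I would read off the generalised domain walls of the underlying stacked code from Theorem~\ref{Th2} with these parameters: the wall $h_i$ is admitted precisely when $M=\tfrac{d}{2}-1$, i.e.\ when $d-2=\tfrac{d}{2}-1$, which holds if and only if $d=2$; the walls $c_{ij}$ are admitted because $n=d\geq 2$; and the walls $s^{(k)}_{i_1,\ldots,i_k}$ are admitted for every $k\leq d$.

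The third step is to translate this list of walls into LPLOs of the colour code using the procedure of Sec.~\ref{VC}. The common transparent boundary identifies all electric charges into one class $e$ and all magnetic fluxes into one class $m$, so the reduced eigenstate excitations are $\{1,e,m,em\}$, corresponding to $\bar{I},\bar{Z},\bar{X},\bar{Y}$ on the single logical qubit. Under the induced equivalence, each $c_{ij}$ becomes trivial (it permutes only the codes of excitations within a fixed class, so $c_{ij}\sim 1$); all $h_i$, when $d=2$, collapse to a single class $h:e\leftrightarrow m$ corresponding to $\bar{H}$; and for each $k$ all $s^{(k)}_{i_1,\ldots,i_k}$ collapse to one class $s^{(k)}$ acting as $m\to m\,s^{(k-1)}$, with $s^{(2)}:m\to em$. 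As in the ``Iteration'' step of Sec.~\ref{VC}, the class $s^{(k)}$ corresponds to the logical operator $\bar{R}_k$ (with $\bar{R}_2:\bar{X}\to\bar{Y}$ and $\bar{R}_k:\bar{X}\to\bar{X}\bar{R}_{k-1}$ in general). Iterating $k$ from $2$ to $d$ exhausts all the generalised domain walls of the code, consistent with Corollary~\ref{cor1}, which with $a=0$ forbids any wall or LPLO beyond level $d$; hence the group of LPLOs is generated by the Pauli operators, the Hadamard operator if and only if $d=2$, and the operators $\bar{R}_k$ for all $k\leq d$, which is the claim.

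The only point requiring real care is the legitimacy of the quotient used in Sec.~\ref{VC}: one must argue that passing to equivalence classes of eigenstate excitations and of domain walls is well-defined, that the resulting LPLOs are independent of the chosen class representatives, and that the attached boundary does not generate genuinely new walls beyond those of the stacked code. This holds because the transparent boundaries realise symmetries of the underlying stacked code, so excitation equivalences propagate consistently to domain-wall equivalences at every level of the hierarchy; this is the main obstacle, and it is conceptual rather than computational. A minor auxiliary check is the arithmetic that the maximal choice $M=d-2$ makes $\tfrac{d}{d-M-1}=d$ and that $n=d$ surface codes are exactly enough to support $s^{(d)}$, which is what pins the top gate at $\bar{R}_d$.
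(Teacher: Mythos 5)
Your proposal is correct and follows essentially the same route as the paper: it invokes the unfolding of the tetrahedral colour code into $d$ attached surface codes with $M=d-2$, reads the admitted walls off Theorem~\ref{Th2} (so $h_i$ only when $d=2$, $c_{ij}$ trivialised, $s^{(k)}$ for $k\leq d$), and then applies the quotient procedure of Sec.~\ref{VC} to identify the classes $h$ and $s^{(k)}$ with $\bar{H}$ and $\bar{R}_k$. The well-definedness concern you flag is exactly the point the paper addresses by noting that the transparent boundaries are symmetries of the underlying stacked code.
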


\subsubsection{Adding Holes}
Consider the impact of producing new boundaries by adding holes to surface codes. With holes, instead of having equivalences between different excitations that correspond to the same logical operators, we have correspondences between different logical operators that correspond to different excitations. In particular, for every topologically distinct path ending in a boundary which a given excitation may take we have a different Pauli logical operator corresponding to this same excitation. Thus, generalised domain walls that act on the excitation will correspond to LPLOs that must act in the same way on all these Pauli logical operators.

To illustrate this, consider first a single $d$-dimensional toric code, realised on a d-torus, equivalent to a direct product of $d$ circles, or to a $d$ dimensional hypercube with opposite faces identified. Such a code with $E$-dimensional electric charges encodes ${d}\choose{E+1}$ logical qubits; since we may choose that many different combinations of $E+1$ circles to be included in the support of each logical $\bar{Z}_i$ operator. The other $d-(E+1)=M+1$ circles must be included in the support of the corresponding logical $X_i$ operator. Since each $\bar{X}$ (or $\bar{Z}$) operator is produced by the same magnetic flux (or electric charge), the $h: e \leftrightarrow m$ domain wall, when it is admitted (i.e.~when $E=M$), must correspond to a logical operator that acts non-trivially on all the $\bar{X}$ and $\bar{Z}$ operators. Specifically, this logical operator will be a product of Hadamard operators on all of the logical qubits, along with $\frac{1}{2}{{d}\choose{E+1}}=\frac{1}{2} {{d}\choose{d/2}}$  SWAP operations that reorder the logical qubits by swapping each qubit with $\bar{X}$ operator acting around a particular $d/2$ circles with the qubit which has $\bar{Z}$ operator acting around those same circles.

More generally, if we have a stack of such $d$ dimensional toric codes, we may consider the other domain walls we identified for the $d$ dimensional stack of surface codes. The $c_{ij}$ wall will correspond to the logical operator that is a product of $CNOT$ operators between each pair of corresponding qubits in toric codes $i$ and $j$. The $s_{ij}^{(2)}$ wall will correspond to a product of $d\cdot {{M+1}\choose {E+1}}$ logical $CZ$ operators between each qubit $\alpha$ in code $i$ and the ${M+1}\choose {E+1}$ qubits in code $j$ for which the corresponding circles to those included in the support of $\bar{Z}_{j,\beta}$ are also included in the support of $\bar{X}_{i,\alpha}$. Similarly, the $s_{ijk}^{(3)}$ wall will correspond to a product of logical $\overline{CCZ}$ operators, and so on for higher levels of the Clifford hierarchy.

The conclusion of this is that adding holes can add more logical qubits to a stacked code, but does not add more LPLOs. Instead, the LPLOs become (in general quite complicated) operators that act consistently with the topology of different logical operators. Note that this observation is similar to the concept of \textit{homology-preserving operators}, discussed for two dimensional codes by Beverland $\textit{et al}$ \cite{Beverland}.

\section{Conclusion}

In this paper, we have developed a framework for finding LPLOs in topological stabiliser codes. Using this approach, we have provided a full classification of the LPLOs admitted by codes that are locally equivalent to a finite number of disjoint surface codes in $d$ dimensions. In addition, we have illustrated how the results may be adapted to determine the LPLOs in codes that are locally equivalent to a number of surface codes in the bulk, but have other types of boundary conditions, specifically considering colour codes and generalised toric codes.  We have provided analysis of models that are not equivalent to any number of surface codes but which have a similar structure, specifically, the Levin-Wen fermion model and abelian quantum double models.  

This framework may assist with finding low-overhead architectures for implementing quantum algorithms.  Much of the overhead associated with standard architectures for fault-tolerant quantum computing can be attributed to the sophisticated techniques such as magic state distillation that are needed when locality-preserving implementations of logical operators are not available~\cite{BraHaah, Fowleretal}.  Our approach can be used to determine which logical operators are locality-preserving in a topological stabiliser code, and so offers the potential to match algorithms with codes that maximise the use of locality-preserving gates.  

While we have considered a large and widely-studied class of topological codes, we note that many important codes remain outside the scope of our analysis. Specifically, we have not given consideration to any models with non-abelian excitations. If our approach could be adapted to such models, it would be particularly interesting to compare the power of LPLOs in abelian and non-abelian models in higher dimensions. In particular, Beverland~\textit{et al.}~\cite{Beverland} have observed an apparent tradeoff in two dimensional codes between the computational power offered by the braiding of non-abelian anyons and the range of LPLOs admitted by codes with such anyons.  Determining whether such a tradeoff generalises to higher dimensional codes is an interesting open question.

Our analysis has also not considered the additional computational structures that are enabled through the introduction of topological defects, such as twists~\cite{BombinTwist, BombinTwist2, Brown}. The group of LPLOs admitted by a code is, in general, different from that admitted by the braiding of defects in such codes. For example, while the two dimensional surface code does not allow an $\bar{R}_2$ LPLO, such an operator can be realised by including twists~\cite{BombinTwist, Brown}. Thus, classifying the operators given by braiding twists is yet another worthwhile research direction in understanding the power of topological codes for fault tolerant quantum computation, beyond what we have considered here. 

Understanding the set of logical operators that can be implemented by braiding twists in a code is closely tied to this work, since twists exist as endpoints (or more generally, the boundaries) of domain walls~\cite{BombinTwist, Barkeshli}. Specifically, we may find a twist corresponding to the endpoint of a domain wall in a two dimensional code by allowing the domain wall to terminate in the bulk~\cite{Bridgeman}.  This necessitates altering the Hamiltonian at these endpoints so that all of the terms of the Hamiltonian continue to commute.  In general, as illustrated for the case of the two-dimensional surface code, the inclusion of domain walls with such topological defects at their boundaries may allow for a richer and potentially more powerful set of locally-implementable logic gates, and this warrants further study.

A natural next step would be to seek to understand analogous structures to twists in higher dimensional topological stabiliser codes. For example, we have seen that the three dimensional colour code admits a two dimensional domain wall, corresponding to the LPLO $\bar{R}_3$.  On an open surface, this domain wall possesses a loop-like boundary forming an extended topological defect, which may then admit interesting braiding relations. Indeed, one might wonder if the fact that this defect corresponds to a non-Clifford LPLO may allow for non-Clifford logical operators to be performed by braiding it. It would also be valuable to consider point-like defects that would exist at the endpoints of one dimensional, generalised domain walls in higher dimensional codes. While the braiding of such point-like defects in more than two dimensions must be trivial, they may admit interesting braiding relations with higher dimensional excitations and defects.  Finally, as yet another example worth further study, the three-dimensional model studied by Roberts \textit{et al.}~\cite{Roberts} is a domain wall in the four-dimensional toric code, and is thermally stable.  The topological defect associated with the two-dimensional boundary of this object is then also thermally stable, and as a result may offer additional robustness or other features to a computational model.  Understanding the nature of these defects in higher dimensional codes would be invaluable in understanding the computational power, and structure, of higher dimensional codes.

\begin{acknowledgments}
We acknowledge discussions with Jacob Bridgeman, Ben Brown, Andrew Doherty, Steve Flammia, Sam Roberts and Dominic Williamson.  This work was financially supported by the ARC via the Centre of Excellence in Engineered Quantum Systems (EQuS), project number CE110001013, and via project DP170103073. 
\end{acknowledgments}

\end{document}